\newtheorem{assumption}{\bf Assumption}
\newtheorem{definition}{\bf Definition}
\newtheorem{theorem}{\bf Theorem}
\newtheorem{proposition}{\bf Proposition}
\newtheorem{corollary}{\bf Corollary}
\newtheorem{remark}{\bf Remark}
\newcommand{\prob}[1]{\mathrm{Pr}\left[#1\right]}
\newcommand{\tr}[1]{\mathrm{tr}\left(#1\right)}
\begin{document}
\title{Predictive control for nonlinear stochastic systems:\\
Closed-loop guarantees with unbounded noise}
\author{Johannes K\"ohler, Melanie N. Zeilinger%
\thanks{Institute for Dynamic Systems and Control, ETH Zürich, Zürich CH-8092, Switzerland (email:$\{$jkoehle$|$mzeilinger$\}$@ethz.ch).}
\thanks{Johannes K\"ohler was supported by the Swiss National Science Foundation under NCCR Automation (grant agreement 51NF40 180545).}}
\IEEEoverridecommandlockouts
\IEEEpubid{\begin{minipage}{\textwidth}\ \\[20pt] \\ \\
         \copyright 2025 IEEE.  Personal use of this material is  permitted.  Permission from IEEE must be obtained for all other uses, in  any current or future media, including reprinting/republishing this material for advertising or promotional purposes, creating new  collective works, for resale or redistribution to servers or lists, or  reuse of any copyrighted component of this work in other works.
     \end{minipage}}
\maketitle
\begin{abstract}
We present a stochastic model predictive control framework for nonlinear systems subject to unbounded  process noise with closed-loop guarantees.
First, we provide a conceptual shrinking-horizon framework that utilizes general probabilistic reachable sets and minimizes the expected cost. 
Then, we provide a tractable receding-horizon formulation that uses a nominal state to minimize a deterministic quadratic cost and satisfy tightened constraints. 
Our theoretical analysis demonstrates recursive feasibility, satisfaction of chance constraints, and bounds on the expected cost for the resulting closed-loop system. 
We provide a constructive design for probabilistic reachable sets of nonlinear continuously differentiable systems using stochastic contraction metrics and an assumed bound on the covariance matrices. 
Numerical simulations highlight the computational efficiency and theoretical guarantees of the proposed method.
Overall, this paper provides a framework for computationally tractable stochastic predictive control with closed-loop guarantees for nonlinear systems with unbounded noise. 

\end{abstract}  
\begin{IEEEkeywords}
predictive control, chance constraints, nonlinear systems, stochastic systems, constrained control
 \end{IEEEkeywords} 
\section{Introduction}
Model predictive control (MPC) is an optimization-based method that yields high-performance control for general nonlinear systems and ensures satisfaction of safety-critical constraints~\cite{rawlings2017model}. 
Robust MPC methods predict over-approximations of the robust reachable set to ensure constraint satisfaction for any admissible bounded noise~\cite{Kouvaritakis2016textbook}. 
However, such worst-case bounds can be overly conservative or simply inadequate if no bound on the noise is available. 
Stochastic MPC (SMPC) formulations avoid these problems by leveraging distributional information about the noise and allowing for a user-chosen probability of constraint violation~\cite{mesbah2016stochastic}. 
In this paper, we address the design of SMPC schemes that yield suitable closed-loop guarantees for nonlinear systems subject to unbounded noise. 

\subsection{Related work} 
The design of SMPC schemes with closed-loop guarantees faces two key challenges:
\begin{enumerate}
\item How to reformulate chance constraints as tractable deterministic conditions?
\item How to formulate a tractable finite-horizon problem that yields closed-loop guarantees?
\end{enumerate}
In the following, we discuss state-of-the-art methods addressing these two challenges.\\
\textit{Probabilistic reachable sets for nonlinear systems:} 
The reformulation of chance constraints can be equivalently posed as the computation of probabilistic reachable sets (PRS), i.e., sets which contain uncertain future states at least with a specified probability~\cite[Def.~4]{Hewing2018recovery_mechanism}. 
For linear stochastic systems, such PRS can be efficiently computed using analytical bounds~\cite{cinquemani2011convexity,Farina2016soverview,paulson2020stochastic,Hewing2018recovery_mechanism,Hewing2020indirect} or offline sampling~\cite{hewing2019scenario,Lorenzen2017tightening,aolaritei2023wasserstein,campi2019scenario}. 
For nonlinear stochastic systems, there exist various approaches to approximate PRS, such as (generalized) polynomial chaos expansion~\cite{paulson2019efficient,bradford2019output}, sampling~\cite{kantas2009sequential,sehr2017particle}, or linearization~\cite{hewing2019cautious}, see also the overviews~\cite{mesbah2016stochastic,landgraf2023probabilistic}. 
These methods typically trade off computational complexity with approximation errors and thus probabilistic containment is not guaranteed. 
In case of bounded noise, there exist methods to compute valid over-approximations of the robust reachable set~\cite{mayne2011tube,cannon2011robust,houska2019robust}, such as contraction metrics~\cite{zhao2022tube,singh2023robust,sasfi2023robust}. 
Similarly, valid PRS can be effectively computed for nonlinear stochastic systems if robust bounds on the noise are available~\cite{bonzanini2019tube,schluter2020constraint,zhong2023nonlinear}. 
The computation of valid PRS for nonlinear systems subject to unbounded stochastic noise remains a challenging problem.

\textit{Closed-loop properties in SMPC:} 
A key challenge in the design of SMPC schemes is that constraint violations are explicitly permitted with some non-zero probability. Hence, na\"ive implementations may lose feasibility and thus all closed-loop properties during online operation~\cite[Sec.~IV.D]{primbs2009stochastic}. 
Two conceptual SMPC frameworks to address this problem are \textit{robust techniques} and \textit{feasibility-preserving algorithms} (cf.~\cite[Sec.~I]{koehler2023stochastic}). 
Robust techniques impose more restrictive constraints to robustly ensure recursive feasibility for the worst-case noise realization. 
Corresponding linear and nonlinear SMPC schemes can be found in~\cite{cannon2010stochastic,korda2011strongly,Kouvaritakis2016textbook,Lorenzen2017tightening,aolaritei2023wasserstein} and \cite{lorenzen2019stochastic,schluter2020constraint,mcallister2022nonlinear}, respectively. 
Key limitations include the requirement of a known worst-case bound on the noise and conservatism (cf. comparison in~\cite[Sec.~IV]{koehler2023stochastic}). 
In contrast, feasibility-preserving algorithms specify the initial condition in the SMPC formulation such that closed-loop properties are preserved independently of the realized noise~\cite{Farina2016soverview,Hewing2018recovery_mechanism,Hewing2020indirect,schluter2022stochastic,Koehler2022interpolating,wang2021recursive}. 
One particularly promising approach is indirect-feedback SMPC~\cite{Hewing2020indirect}, which has been the basis for many recent extensions and developments~\cite{hewing2019scenario,muntwiler2022lqg,mark2021data,arcari2023stochastic}. 
This approach leverages linear system dynamics and a nominal state initialization to define stochastic error dynamics that evolve completely independent of the variables optimized by the SMPC. 
As a result, chance constraints on the closed-loop system can be efficiently formulated as tightened deterministic constraints on the nominal state predicted by the SMPC. 
However, the developments in~\cite{Hewing2020indirect,hewing2019scenario,muntwiler2022lqg,mark2021data,arcari2023stochastic} strongly rely on the independence between the error and the nominal trajectory computed by the SMPC, which prohibits application to nonlinear systems. 
In~\cite{wang2021recursive}, a feasibility-preserving SMPC for nonlinear systems is proposed that uses online adjustments on the probability level in the constraints to obtain closed-loop guarantees. 
However, application requires online evaluation of probability levels through sampling, which results in a significant increase in computational complexity and limits application to short finite-horizon problems. 
Overall, the design of tractable nonlinear SMPC schemes with closed-loop guarantees remains largely unsolved~\cite{mayne2016robust}.

\subsection{Contribution}
We provide a computationally tractable framework for nonlinear SMPC that yields closed-loop guarantees for unbounded noise. 
This is based on three main technical contributions: 
\begin{itemize}
\item We generalize the indirect-feedback SMPC framework~\cite{Hewing2020indirect} to nonlinear systems by removing the independence assumption and using general PRS; 
\item We provide a tractable SMPC scheme using a nominal system and tightened constraints (Thm.~\ref{thm:SMPC}); 
\item We extend results on stochastic contraction metrics~\cite{pham2013stochastic,tsukamoto2020robust} to design PRS for nonlinear stochastic systems with unbounded noise (Thm.~\ref{thm:CCM}--\ref{thm:PRS}). 
\end{itemize}
The resulting nonlinear SMPC scheme has a computational demand comparable to a nominal MPC scheme and ensures the following closed-loop properties: 
\begin{enumerate}[label=P\arabic*)]
\item Recursive feasibility;
\label{enum_desired_feas}
\item Chance constraint satisfaction;
\label{enum_desired_chance}
\item Bound on expected cost. 
\label{enum_desired_cost}
\end{enumerate} 
The proposed design is applicable to independent zero mean noise with a known bound on the covariance, quadratic costs, continuously differentiable dynamics that are linear in the noise, and continuously differentiable constraints. 
We demonstrate the practical applicability and computational efficiency of the proposed design with a numerical example involving a chain of mass-spring-dampers with nonlinear Coulomb friction.

\subsubsection*{Outline} 
We first present the problem setup (Sec.~\ref{sec:setup}) and propose a conceptual framework for SMPC with a shrinking-horizon formulation (Sec.~\ref{sec:SNMPC_shrinking}). 
Then, we provide a tractable formulation with nominal predictions (Sec.~\ref{sec:SNMPC_tractable}) 
and show how to compute PRS for nonlinear systems using stochastic contraction metrics (Sec.~\ref{sec:CCM}). 
Afterwards, we summarize the overall design (Sec.~\ref{sec:algorithm}) and provide a discussion (Sec.~\ref{sec:discussion}). 
Finally, we illustrate the results with a numerical example (Sec.~\ref{sec:num}) and end with a conclusion (Sec.~\ref{sec:conclusion}). 
Appendix~\ref{app_CCM} details the offline design of contraction metrics and Appendix~\ref{app:affine_w} considers dynamics that are non-affine in the noise.

\subsubsection*{Notation}
The set of integers in an interval $[a,b]$ is denoted by $\mathbb{I}_{[a,b]}$. 
By $u(a:b)\in\mathbb{U}^{b-a+1}$, $a,b\in\mathbb{I}_{\geq 0}$, we denote the sequence with elements $u(k)\in\mathbb{U}$, $k\in\mathbb{I}_{[a,b]}$. 
We denote the prediction for time $i$ computed at time $k$ by $\mathbf{u}_{i|k}\in\mathbb{U}$ and the predicted sequence containing elements $i\in\mathbb{I}_{[a,b]}$ by $\mathbf{u}_{a:b|k}\in\mathbb{U}^{b-a+1}$. 
Whenever clear from the context, we denote the full predicted sequence by $\mathbf{u}_{\cdot|k}$. 
By $Q\succ 0$ ($\succeq 0$) we denote that a symmetric matrix $Q$ is positive (semi)-definite and by $Q^{1/2}$ we denote the symmetric matrix square-root, i.e., $Q^{1/2}Q^{1/2}=Q$. 
We denote the Euclidean norm of a vector $x$ by $\|x\|=\sqrt{x^\top x}$ and the weighted norm w.r.t. a positive definite matrix $M$ by $\|x\|_M:=\sqrt{x^\top M x}$. 
For two sets $\mathbb{A},\mathbb{B}\subseteq\mathbb{R}^n$, the Minkowski sum is defined a $\mathbb{A}\oplus\mathbb{B}=\{a+b\mid a\in\mathbb{A},~b\in\mathbb{B}\}$. 
The probability of an event $A$ is denoted by $\prob{A}$. 
The expectation of a function $\delta(w)$ over a random variable $w$ is denoted by $\mathbb{E}_w[\delta(w)]$ and the expectation conditioned on an event $A$ is given by $\mathbb{E}_w[\delta(w)\mid A]$. 
By $\mathcal{K}_\infty$, we denote the set of continuous functions $\alpha:\mathbb{R}_{\geq 0}\rightarrow\mathbb{R}_{\geq 0}$ that are strictly increasing, unbounded, and satisfy $\alpha(0)=0$.
 For a continuously differentiable function $f:\mathbb{R}^a \times \mathbb{R}^b \to \mathbb{R}^c$, $f(x,u)$, the partial derivative w.r.t. $x$ evaluated at some point $(x,u) = (z,v)$ is defined as $\left.\frac{\partial f}{\partial x}\right|_{(z,v)} \in \mathbb{R}^{c \times a}$. 

\section{Problem formulation}
\label{sec:setup}
We consider a nonlinear stochastic system
\begin{align}
\label{eq:sys}
x(k+1)=f(x(k),u(k),w(k)),\quad x(0)=x_0,
\end{align}
with state $x(k)\in\mathbb{R}^n$, input $u(k)\in\mathbb{U}\subseteq\mathbb{R}^m$, process noise $w(k)\in\mathbb{R}^q$, discrete time $k\in\mathbb{I}_{\geq 0}$, initial condition $x_0\in\mathbb{R}^n$, and input constraint $\mathbb{U}$. 
The dynamics $f$ are known and the state $x(k)$ can be measured. 
\begin{assumption}(Stochastic noise)
\label{asm:disturbance}
The noise $w(k)$ is independently distributed according to distributions $\mathcal{Q}_{\mathrm{w}}(k)$ with zero mean and a variance bound $\Sigma_{\mathrm{w}}\succ 0$, i.e., $\mathbb{E}_{w(k)}[w(k)]=0$, $\mathbb{E}_{w(k)}[w(k)w(k)^\top]\preceq \Sigma_{\mathrm{w}}$, $\forall k\in\mathbb{I}_{\geq 0}$. 
\end{assumption}
Assumption~\ref{asm:disturbance} is quite general, allowing for time-varying, non-Gaussian, and unbounded distributions with a known bound $\Sigma_{\mathrm{w}}$ on their variance. The proposed design in Section~\ref{sec:algorithm} will be distributionally robust, i.e., we provide guarantees that are valid for any distribution satisfying Assumption~\ref{asm:disturbance} and implementation does not require knowledge of the distribution $\mathcal{Q}_{\mathrm{w}}$. 
We impose chance constraints of the form 
\begin{align}
\label{eq:prob_state}
\prob{x(k)\in\mathbb{X}}\geq p\quad \forall k\in\mathbb{I}_{\geq 0},
\end{align}
where $\mathbb{X}\subseteq\mathbb{R}^n$ is some closed set and $p\in(0,1)$ is a desired probability level. 
We define the finite-horizon cost 
\begin{align*}
&\mathcal{J}_N(x(0:N),u(0:N-1))\\:=
&\sum_{k=0}^{N-1}\ell(x(k),u(k))+V_{\mathrm{f}}(x(N)),
\end{align*}
with a user chosen stage cost $\ell:\mathbb{R}^n\times\mathbb{U}\rightarrow\mathbb{R}$, a terminal cost $V_{\mathrm{f}}:\mathbb{R}^n\rightarrow\mathbb{R}$, and a horizon $N\in\mathbb{I}_{\geq 1}$.
We assume that $f,\ell,V_{\mathrm{f}}$ are continuous and $\mathbb{U}$ is compact. 
Ideally, we would like to solve the following stochastic optimal control problem 
\begin{align}
\label{eq:SOCP}
\inf_{\pi}~&\lim_{\bar{N}\rightarrow\infty}\dfrac{1}{\bar{N}}\mathbb{E}_{w(0:\bar{N}-1})\left[\mathcal{J}_{\bar{N}}(x(0:\bar{N}),u(0:\bar{N}-1))\right]\\
\text{s.t. }&\eqref{eq:sys},~\eqref{eq:prob_state},~u(k)=\pi_k(w(0:k-1))\in\mathbb{U},~
k\in\mathbb{I}_{\geq 0},\nonumber
\end{align}
where $\pi$ are causal policies that minimize the infinite-horizon expected cost and ensure satisfaction of the chance constraints~\eqref{eq:prob_state} $\forall k\in\mathbb{I}_{\geq 0}$. 
Problem~\eqref{eq:SOCP} is not computationally tractable for multiple reasons:
(i) optimization over policies $\pi$; (ii) infinite prediction horizon, and (iii) chance constraints~\eqref{eq:prob_state}. 
In this paper, we derive a computational tractable SMPC scheme that uses a receding-horizon implementation, optimizes open-loop inputs, and uses PRS to ensure satisfaction of the chance constraints. 
For some of the specific designs (Sec.~\ref{sec:algorithm}) and the performance analysis (Sec.~\ref{sec:SNMPC_tractable}), we will consider the following standard regularity conditions.
\begin{assumption}(Regularity conditions)
\label{asm:setup_regularity}
\begin{enumerate}[label=\roman*)]
\item The dynamics $f$ are Lipschitz continuous with respect to all arguments. 
\label{asm:setup_regularity_dynamics}
\item The cost is given by $\ell(x,u)=\|x\|_Q^2+\|u\|_R^2$, $V_{\mathrm{f}}(x)=\|x\|_P^2$ with positive definite matrices $Q,R,P$.
\label{asm:setup_regularity_cost}
\item The origin is an equilibrium, i.e., $f(0,0,0)=0$, $0\in\mathbb{U}$, and $0\in\mathrm{interior}(\mathbb{X})$. 
\label{asm:setup_regularity_origin}
\item The state constraint is given by $\mathbb{X}=\left\{x\in\mathbb{R}^n|~h_j(x)\leq 0,~j\in\mathbb{I}_{[1,r]}\right\}$ with $h_j:\mathbb{R}^n\rightarrow\mathbb{R}$ continuously differentiable, Lipschitz continuous, and (w.l.o.g.) $h_j(0)=-1$.
\label{asm:setup_regularity_constraint}
\item The dynamics $f$ are continuously differentiable with respect to all arguments and linear in $w$, i.e.,\\ $f(x,u,w)\equiv f(x,u,0)+Gw$, $G\in\mathbb{R}^{n\times q}$.
\label{asm:setup_regularity_CCM}
\end{enumerate}
\end{assumption}
\begin{remark}
\label{rk:linear_noise}(Linearity in noise)
Condition~\ref{asm:setup_regularity_CCM} requires that the noise $w$ appears linearly in the dynamics, which facilitates the construction of probabilistic reachable sets in Section~\ref{sec:CCM}. In case $w$ enters non-linearly, the construction of the reachable set can be formulated in an augmented state $[x(k);w(k)]\in\mathbb{R}^{n+p}$, see Appendix~\ref{app:affine_w} for details. 
\end{remark}
\begin{remark}
\label{rk:input}(Probabilistic input constraints)
In SMPC, input constraints $u(k)\in\mathbb{U}$ are often relaxed to probabilistic input constraints $\prob{u(k)\in\mathbb{U}}\geq p$~\cite{Hewing2018recovery_mechanism,Hewing2020indirect,schluter2022stochastic,Koehler2022interpolating}. 
We consider hard input constraints due to their prevalence in practical applications and to simplify the exposition of PRS. 
However, the presented results can be naturally adjusted to this setup, see Section~\ref{sec:discussion_PRS} for details. 
\end{remark}

\section{SMPC using PRS - the shrinking-horizon case}
\label{sec:SNMPC_shrinking}
In this section, we present the proposed framework for nonlinear SMPC using general PRS. 
We first consider a shrinking-horizon problem with some finite horizon $N\in\mathbb{I}_{>0}$ and treat the more general receding-horizon problem in Section~\ref{sec:SNMPC_tractable}. 
We focus on how to incorporate the state measurement $x(k)$ in the SMPC and the resulting closed-loop properties. 
To this end, we consider the following definition of PRS.
\begin{definition}[PRS]
\label{def:PRS}
 Consider system~\eqref{eq:sys} in closed loop with any causal policy $u(k)=\pi_k(w(0:k-1))\in\mathbb{U}$, $k\in\mathbb{I}_{\geq 0}$. 
For a given probability $p\in(0,1)$, a sequence of sets $\mathcal{R}_k$, $k\in\mathbb{I}_{\geq 0}$ are probabilistic reachable sets (PRS) if
\begin{align}
\label{eq:PRS}
\prob{x(k)\in\mathcal{R}_k}\geq p, \quad \forall k\in\mathbb{I}_{\geq 0}.
\end{align}
\end{definition}
In this paper, we focus on optimizing open-loop input sequences and consider the following parametrization of PRS.
\begin{assumption}[PRS]
\label{asm:PRS}
We know a sequence of closed parametrized sets $\mathcal{R}_k:\mathbb{R}^n\times\mathbb{U}^k\rightarrow2^{\mathbb{R}^n}$, $k\in\mathbb{I}_{\geq 0}$, such that 
$\mathcal{R}_k(x_0,u(0:k-1))$, $k\in\mathbb{I}_{\geq 0}$, are PRS (Definition~\ref{def:PRS}). 
\end{assumption}
The constructive design of such PRS is studied in Section~\ref{sec:CCM}. 
At each time $k\in\mathbb{I}_{[0,N-1]}$, the proposed shrinking-horizon SMPC considers the following optimization problem:
\begin{subequations}
\label{eq:SNMPC_shrinking}
\begin{align}
\label{eq:SNMPC_shrinking_cost}
\inf_{\mathbf{u}_{0:N-1|k}\in\mathbb{U}^{N}}&~\mathbb{E}_{w(k:N-1)}\left[\mathcal{J}_{N-k}(\mathbf{x}_{k:N|k},\mathbf{u}_{k:N-1|k})\right]\\
\label{eq:SNMPC_shrinking_state}
\text{s.t. }&\mathcal{R}_i(x_0,\mathbf{u}_{0:i-1|k})\subseteq \mathbb{X},\\
\label{eq:SNMPC_shrinking_input}
&\mathbf{u}_{0:k-1|k}={u}(0:k-1),\\
\label{eq:SNMPC_shrinking_init}
&\mathbf{x}_{k|k}=x(k),\\
\label{eq:SNMPC_shrinking_prediction}
&\mathbf{x}_{i+1|k}=f(\mathbf{x}_{i|k},\mathbf{u}_{i|k},w(i)), \\
\label{eq:SNMPC_shrinking_prediction_w}
&w(i)\sim\mathcal{Q}_{\mathrm{w}}(i),\\
&i\in\mathbb{I}_{[k,N-1]},\nonumber
\end{align}
\end{subequations}
which depends on the current state $x(k)$ and the past applied inputs ${u}(0:k-1)$. 
We assume that a minimizing input sequence exists\footnote{%
If $f,\ell,V_{\mathrm{f}}$ are \textit{uniformly} continuous, then the expected cost is a well-defined continuous function of $\mathbf{u}_{\cdot|k}$, see~\cite[Lemma 16]{mcallister2022nonlinear}. 
In this case, compact constraints $\mathbb{U}$ and the closed constraints~\eqref{eq:SNMPC_shrinking_state} ensure that a minimizer exists for each $x(k)\in\mathbb{R}^n$, $k\in\mathbb{I}_{[0,N-1]}$, assuming the problem is feasible.}, if the problem is feasible. We denote a minimizing input sequence by $\mathbf{u}^\star_{\cdot|k}\in\mathbb{U}^{N}$ and the corresponding optimal predicted state distribution by $\mathbf{x}_{k:N|k}^\star$. 
In closed-loop operation, we solve Problem~\eqref{eq:SNMPC_shrinking} at each time $k\in\mathbb{I}_{[0,N-1]}$ and apply the optimized input for this point in time, i.e., $u(k)=\mathbf{u}^\star_{k|k}$. 
This formulation is insipred by the linear SMPC scheme in~\cite{Hewing2020indirect}, since the state measurement $x(k)$ is only used for the expected cost through the stochastic prediction~\eqref{eq:SNMPC_shrinking_init}--\eqref{eq:SNMPC_shrinking_prediction_w}, 
while the chance constraints~\eqref{eq:prob_state} are enforced through the constraints~\eqref{eq:SNMPC_shrinking_state}, which do not directly dependent on the past random noise realizations $w(0:k-1)$. 
The constraint~\eqref{eq:SNMPC_shrinking_input} ensures that the MPC considers a causal policy. 
The following theorem formalizes the closed-loop properties of the proposed shrinking-horizon SMPC formulation. 
\begin{theorem}
\label{thm:SMPC_shrinking}
Let Assumptions~\ref{asm:disturbance} and \ref{asm:PRS} hold and suppose Problem~\eqref{eq:SNMPC_shrinking} is feasible at $k=0$. 
Then, Problem~\eqref{eq:SNMPC_shrinking} is feasible and the chance constraints~\eqref{eq:prob_state} are satisfied for the resulting closed-loop system for all $k\in\mathbb{I}_{[0,N-1]}$. 
Furthermore, the closed-loop cost satisfies the following bound:
\begin{align}
\label{eq:SMPC_performance_shrinking}
&\mathbb{E}_{w(0:N-1)}\left[\mathcal{J}_N(x(0:N),u(0:N-1))\right]\nonumber\\
\leq& \mathbb{E}_{w(0:N-1)}\left[\mathcal{J}_N(\mathbf{x}_{0:N|0}^\star,\mathbf{u}_{0:N-1|0}^\star)\right]. 
\end{align}
\end{theorem}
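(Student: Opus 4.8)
The plan is to prove the three claims in sequence, exploiting the central design feature that the state constraint~\eqref{eq:SNMPC_shrinking_state} and the PRS parametrization depend only on the fixed initial condition $x_0$ and the input sequence, and \emph{not} on the measured state or the noise realization. For \textbf{recursive feasibility} I would construct an explicit feasible candidate at time $k+1$ from the minimizer $\mathbf{u}^\star_{\cdot|k}$ at time $k$, namely the unchanged sequence $\mathbf{u}_{\cdot|k+1}:=\mathbf{u}^\star_{\cdot|k}\in\mathbb{U}^{N}$ (the horizon endpoint is fixed, so no shift is needed). Since we apply $u(k)=\mathbf{u}^\star_{k|k}$ and~\eqref{eq:SNMPC_shrinking_input} gives $\mathbf{u}^\star_{0:k-1|k}=u(0:k-1)$, we have $\mathbf{u}^\star_{0:k|k}=u(0:k)$, so the candidate satisfies the consistency constraint~\eqref{eq:SNMPC_shrinking_input} at time $k+1$. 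Because~\eqref{eq:SNMPC_shrinking_state} is a function of $(x_0,\mathbf{u}_{\cdot|k+1})$ alone and the input sequence is unchanged, the state constraints remain satisfied verbatim. Hence Problem~\eqref{eq:SNMPC_shrinking} is feasible at $k+1$, and feasibility for all $k\in\mathbb{I}_{[0,N-1]}$ follows by induction from feasibility at $k=0$.

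For \textbf{chance constraint satisfaction} the crucial observation is that recursive feasibility makes~\eqref{eq:SNMPC_shrinking_state} hold \emph{pathwise}, i.e., for every noise realization. Fixing a time $j\in\mathbb{I}_{[0,N-1]}$ and evaluating Problem~\eqref{eq:SNMPC_shrinking} at $k=j$, the index-$j$ instance of~\eqref{eq:SNMPC_shrinking_state} together with~\eqref{eq:SNMPC_shrinking_input} yields $\mathcal{R}_j(x_0,u(0:j-1))\subseteq\mathbb{X}$ for the \emph{applied} inputs, again for every realization. Consequently the containment event $\{x(j)\in\mathcal{R}_j(x_0,u(0:j-1))\}$ is pathwise a subset of $\{x(j)\in\mathbb{X}\}$, so taking probabilities and invoking the PRS property of Assumption~\ref{asm:PRS} gives $\prob{x(j)\in\mathbb{X}}\geq\prob{x(j)\in\mathcal{R}_j(x_0,u(0:j-1))}\geq p$.

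For the \textbf{performance bound} I would run a dynamic-programming argument on the expected cost. Let $J_k^\star(x(k))$ denote the optimal value of~\eqref{eq:SNMPC_shrinking} at time $k$, set $J_N^\star:=V_{\mathrm{f}}$, and define $\bar V_k:=\mathbb{E}_{w(0:k-1)}[\sum_{i=0}^{k-1}\ell(x(i),u(i))+J_k^\star(x(k))]$, so that $\bar V_0$ is the right-hand side of~\eqref{eq:SMPC_performance_shrinking} and $\bar V_N$ is the closed-loop cost on the left. The goal is to show $\bar V_{k+1}\leq\bar V_k$ and telescope. Since the prediction~\eqref{eq:SNMPC_shrinking_init}--\eqref{eq:SNMPC_shrinking_prediction_w} is initialized at the measured state and propagates the true dynamics with noise drawn from $\mathcal{Q}_{\mathrm{w}}(i)$, conditioning on $w(k)$ and using the independence of Assumption~\ref{asm:disturbance} splits $J_k^\star(x(k))$ into the incurred stage cost $\ell(x(k),u(k))$ plus the expected cost-to-go of the tail sequence $\mathbf{u}^\star_{k+1:N-1|k}$ evaluated from $x(k+1)$. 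This tail is exactly the feasible candidate at time $k+1$, so its cost-to-go upper-bounds $J_{k+1}^\star(x(k+1))$; matching the prediction-noise distribution to the closed-loop distribution of $w(k)$ then yields $J_k^\star(x(k))\geq\ell(x(k),u(k))+\mathbb{E}_{w(k)}[J_{k+1}^\star(x(k+1))]$. Taking $\mathbb{E}_{w(0:k-1)}$ and adding the accumulated cost gives $\bar V_k\geq\bar V_{k+1}$, and chaining from $k=0$ to $k=N-1$ delivers~\eqref{eq:SMPC_performance_shrinking}.

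I expect the main obstacle to be this last step, specifically the clean bookkeeping of the nested expectation. The delicate point is recognizing that the dummy prediction noise in~\eqref{eq:SNMPC_shrinking_prediction_w} shares the distribution $\mathcal{Q}_{\mathrm{w}}(k)$ of the realized closed-loop noise; this is precisely what lets the tower property identify the conditional cost-to-go of the time-$k$ prediction with a feasible candidate value at time $k+1$, after which the remaining inequality is immediate from optimality at $k+1$. The feasibility and chance-constraint parts, by contrast, reduce to the structural fact that~\eqref{eq:SNMPC_shrinking_state} is frozen in $(x_0,\mathbf{u}_{\cdot|k})$ and should be essentially bookkeeping.
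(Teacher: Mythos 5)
Your proposal is correct and takes essentially the same route as the paper's proof: the identical unchanged-sequence candidate for recursive feasibility, the identical pathwise-constraint-plus-PRS argument for chance constraints, and the identical feasibility-plus-tower-property telescoping for the cost bound, where your value-function bookkeeping $\bar V_k$ is just a notational repackaging of the paper's convention of treating the incurred cost $\sum_{i=0}^{k-1}\ell(x(i),u(i))$ as constants inside $\mathcal{J}_N$ via $\mathbf{x}^\star_{0:k|k}=x(0:k)$. The only detail worth adding explicitly (which the paper states) is that the SMPC inputs constitute a causal policy $u(k)=\pi_k(w(0:k-1))$, which is the prerequisite for invoking Definition~\ref{def:PRS} in the chance-constraint step.
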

\begin{proof}
\textbf{Recursive feasibility:} Given an optimal input sequence $\mathbf{u}^\star_{0:N-1|k}$ at some time $k\in\mathbb{I}_{[0,N-2]}$, $\mathbf{u}_{0:N-1|k+1}=\mathbf{u}^\star_{0:N-1|k}$ is a feasible solution to Problem~\eqref{eq:SNMPC_shrinking} at time $k+1$. 
In particular, the constraints~\eqref{eq:SNMPC_shrinking_state} remain unaltered, the added constraint in~\eqref{eq:SNMPC_shrinking_input} remains valid with $\mathbf{u}^\star_{k|k}=u(k)$, and 
the stochastic prediction $\mathbf{x}_{\cdot|k+1}$~\eqref{eq:SNMPC_shrinking_prediction} does not affect feasibility.\\
\textbf{Closed-loop chance constraint satisfaction:} First note that Problem~\eqref{eq:SNMPC_shrinking} yields a causal policy $u(k)=\pi_k(w(0:k-1))$, considering also~\eqref{eq:SNMPC_shrinking_input} and independently distributed noise $w$ (Asm.~\ref{asm:disturbance}). 
For any $k\in\mathbb{I}_{[0,N-1]}$, feasibility of Problem~\eqref{eq:SNMPC_shrinking} ensures $\mathcal{R}_k(x_0,u(0:k-1)=\mathcal{R}_k(x_0,\mathbf{u}^\star_{0:k-1|k})\subseteq\mathbb{X}$.
Assumption~\ref{asm:PRS} with the PRS definition (Def.~\ref{def:PRS}) yields 
\begin{align*}
\prob{x(k)\in\mathbb{X}}\stackrel{\eqref{eq:SNMPC_shrinking_state}}{\geq}& \prob{x(k)\in\mathcal{R}_k(x_0,\mathbf{u}_{0:k-1|k}^\star)}\\
\stackrel{\eqref{eq:SNMPC_shrinking_input}}{=} &\prob{x(k)\in\mathcal{R}_k(x_0,{u}(0:k-1))}\stackrel{\eqref{eq:PRS}}{\geq }p, 
\end{align*} 
i.e., the chance constraint~\eqref{eq:prob_state} holds for the closed-loop system. \\
\textbf{Expected cost:} The following derivation follows standard SMPC arguments, cf.~\cite[Thm.~3]{Hewing2020indirect} and \cite[Prop.~11]{mcallister2022nonlinear}. 
Denote $\mathbf{x}^\star_{0:k|k}=x(0:k)$ and by $\mathbf{x}^\star_{k+1:N|k}$ the stochastic prediction~\eqref{eq:SNMPC_shrinking_init}--\eqref{eq:SNMPC_shrinking_prediction_w} using $\mathbf{u}^\star_{k:k+N-1|k}$. 
Consider an arbitrary time $k\in\mathbb{I}_{[0,N-1]}$ and note that Problem~\eqref{eq:SNMPC_shrinking} minimizes the cost
\begin{align*}
&\mathbb{E}_{w(k:N-1)}\left[\mathcal{J}_{N}(\mathbf{x}_{\cdot|k},\mathbf{u}_{\cdot|k})|w(0:k-1)\right],
\end{align*}
where the first $k$ elements of the cost are constant and the conditioning on $w(0:k-1)$ uniquely invokes $x(0:k)$ and $u(0:k-1)$. 
For any $k\in\mathbb{I}_{[0,N-2]}$, it holds that 
\begin{align}
\label{eq:expected_cost_shrinking_step}
&\mathbb{E}_{w(k:N-1)}\left[\mathcal{J}_N(\mathbf{x}^\star_{\cdot|k},\mathbf{u}_{\cdot|k}^\star)\mid w(0:k-1)\right]\nonumber\\
=&\mathbb{E}_{w(k)}\left[\mathbb{E}_{w(k+1:N-1)}\left[\mathcal{J}_N(\mathbf{x}_{\cdot|k}^\star,\mathbf{u}_{\cdot|k}^\star)\mid w(0:k)\right]\right]\nonumber\\
\geq &\mathbb{E}_{w(k)}\left[\mathbb{E}_{w(k+1:N-1)}[\mathcal{J}_N(\mathbf{x}^\star_{\cdot|k+1},\mathbf{u}^\star_{\cdot|k+1})\mid w(0:k)]\right],
\end{align}
where the equality uses the law of iterated expectation. 
The inequality uses the fact that the optimal solution to Problem~\eqref{eq:SNMPC_shrinking} at time $k$ is a feasible solution at time $k+1$ and the inner expectation corresponds to the objective of Problem~\eqref{eq:SNMPC_shrinking} at time $k+1$. 
Iteratively applying Inequality~\eqref{eq:expected_cost_shrinking_step} for $k\in\mathbb{I}_{[0,N-2]}$ yields 
\begin{align*}
&\mathbb{E}_{w(0:N-1)}[\mathcal{J}_N(x(0:N),u(0:N-1))]\\
\stackrel{\eqref{eq:SNMPC_shrinking_input}}{=}&\mathbb{E}_{w(0:N-2)}[\mathbb{E}_{w(N-1)}[\mathcal{J}_N(\mathbf{x}^\star_{\cdot|N-1},\mathbf{u}^\star_{\cdot|N-1})\mid w(0:N-2)]]\\
\leq &\dots\leq\mathbb{E}_{w(0)}[\mathbb{E}_{w(1:N-1)}[\mathcal{J}_N(\mathbf{x}^\star_{\cdot|1},\mathbf{u}^\star_{\cdot|1})\mid w(0)]]\\
\leq & \mathbb{E}_{w(0:N-1)}[\mathcal{J}_N(\mathbf{x}^\star_{\cdot|0},\mathbf{u}^\star_{\cdot|0})].\qedhere
\end{align*}
\end{proof}
Inequality~\eqref{eq:SMPC_performance_shrinking} ensures that the (expected) closed-loop cost is no larger than the cost of the open-loop optimal input sequence, i.e., re-optimization improves performance. 
Furthermore, Theorem~\ref{thm:SMPC_shrinking} ensures that the constrained optimization problem~\eqref{eq:SNMPC_shrinking} is feasible and the chance constraints~\eqref{eq:prob_state} are satisfied for all $k\in\mathbb{I}_{[0,N-1]}$. 
Thus, this approach achieves all the desired properties,~\ref{enum_desired_feas},~\ref{enum_desired_chance},~\ref{enum_desired_cost}. 
However, the explicit dependence of Problem~\eqref{eq:SNMPC_shrinking} on all past inputs limits applicability to short horizon problems. 
Furthermore, Problem~\eqref{eq:SNMPC_shrinking} requires the evaluation of the expected cost~\eqref{eq:SNMPC_shrinking_cost} over the stochastic prediction~\eqref{eq:SNMPC_shrinking_init}--\eqref{eq:SNMPC_shrinking_prediction_w}. 
Practical implementations typically use sampling-based approximations, i.e., replacing the expectation by the empirical mean computed through multiple samples~\cite{kantas2009sequential,sehr2017particle}, which drastically increases the computational complexity~\cite{mesbah2016stochastic}. 
The next section addresses these issues by providing a tractable formulation.

\section{Tractable SMPC scheme}  
\label{sec:SNMPC_tractable}
In this section, we derive a tractable receding-horizon SMPC by introducing a nominal state $z$ and over-approximating the PRS $\mathcal{R}_k$ with a simpler parametrization. 
We define the nominal dynamics 
\begin{align}
\label{eq:nominal}
z(k+1)=f(z(k),u(k),0), \quad z(0)=x_0,
\end{align}
which correspond to~\eqref{eq:sys} with $w(\cdot)\equiv 0$. 
The following assumption ensures that we can over-approximate the PRS $\mathcal{R}_k$ by a sequence of sets centred around the nominal state $z$. 
\begin{assumption}
\label{asm:PRS_nom}
(PRS over-approximation) 
There exists a sequence of known compact sets  $\mathbb{D}_k\subseteq\mathbb{R}^n$, $k\in\mathbb{I}_{\geq 0}$, such that
for any $x_0\in\mathbb{R}^n$, $k\in\mathbb{I}_{\geq 0}$, and any ${u}(0:k-1)\in\mathbb{U}^k$:
\begin{align}
\label{eq:PRS_nom}
\mathcal{R}_k(x_0,{u}(0:k-1))\subseteq \{z(k)\}\oplus \mathbb{D}_k,
\end{align}
with $z(k)$ according to~\eqref{eq:nominal} and $\mathcal{R}_k$ from Assumption~\ref{asm:PRS}. 
\end{assumption}
The PRS construction in Section~\ref{sec:CCM} provides a constructive choice for the sets $\mathbb{D}_k$ satisfying Assumption~\ref{asm:PRS_nom}. 
With this more structured over-approximation, we can reformulate the constraint~\eqref{eq:SNMPC_shrinking_state} using the nominal state $z$ and a sequence of tightened constraints $\bar{\mathbb{X}}_i$ satisfying $\mathbb{D}_i\oplus\bar{\mathbb{X}}_i\subseteq \mathbb{X}$, $i\in\mathbb{I}_{\geq 0}$, see Section~\ref{sec:algorithm_tightening} for an efficient design. 
Given these ingredients, we can formulate the proposed tractable finite-horizon SMPC scheme. 
At each time $k\in\mathbb{I}_{\geq 0}$, we solve the following optimization problem using the measured state $x(k)$ and the nominal state $z(k)$:
\begin{subequations}
\label{eq:SNMPC}
\begin{align}
\label{eq:SNMPC_cost}
\min_{\mathbf{u}_{k:k+N-1|k}\in\mathbb{U}^N}&~\mathcal{J}_{N}(\mathbf{x}_{k:k+N|k},\mathbf{u}_{k:k+N-1|k})\\
\label{eq:SNMPC_init_z}
\text{s.t. }&\mathbf{z}_{k|k}=z(k),\\
\label{eq:SNMPC_dynamics_nominal}
&\mathbf{z}_{i+1|k}=f(\mathbf{z}_{i|k},\mathbf{u}_{i|k},0),\\
\label{eq:SNMPC_init_x}
&\mathbf{x}_{k|k}=x(k),\\
\label{eq:SNMPC_dynamics_mean}
&\mathbf{x}_{i+1|k}=f(\mathbf{x}_{i|k},\mathbf{u}_{i|k},0),\\
\label{eq:SNMPC_tightened}
&\mathbf{z}_{i|k}\in\bar{\mathbb{X}}_{i},\\
\label{eq:SNMPC_terminal}
&\mathbf{z}_{k+N|k}\in\mathbb{X}_{\mathrm{f}},\\
&i\in\mathbb{I}_{[k,k+N-1]}.
\end{align}
\end{subequations}
This problem uses a finite (receding) horizon $N\in\mathbb{I}_{\geq 1}$ and a later specified closed terminal set $\mathbb{X}_{\mathrm{f}}\subseteq\mathbb{X}$.
The closed-loop system is given by applying the minimizer\footnote{%
For any $x(k),z(k)\in\mathbb{R}^n$, a minimizer exists since the cost is a continuous function of $\mathbf{u}_{k:k+N-1|k}\in\mathbb{U}^N$ and $\mathbb{U}$ is compact. 
} $u(k)=\mathbf{u}^\star_{k|k}$ to system~\eqref{eq:sys} and the nominal dynamics~\eqref{eq:nominal}.  
The tightened constraints~\eqref{eq:SNMPC_tightened} and the terminal constraint~\eqref{eq:SNMPC_terminal} are posed on the nominal prediction $\mathbf{z}$~\eqref{eq:SNMPC_dynamics_nominal}, which is initialized in~\eqref{eq:SNMPC_init_z} independent of the new measured state $x(k)$, analogous to the constraints in Problem~\eqref{eq:SNMPC_shrinking}.  
The measured state $x(k)$ is used in a separate certainty-equivalent state prediction~\eqref{eq:SNMPC_init_x}--\eqref{eq:SNMPC_dynamics_mean}, which determines the cost~\eqref{eq:SNMPC_cost}.  
The terminal set $\mathbb{X}_{\mathrm{f}}\subseteq\mathbb{X}$ and terminal cost $V_{\mathrm{f}}$ need to be chosen appropriately according to the following conditions. 
\begin{assumption}(Terminal set and cost)
\label{asm:terminal}
There exists an input $u_{\mathrm{f}}\in\mathbb{U}$, such that: 
\begin{enumerate}[label=\alph*)]
\item (Positive invariance) $f(x,u_{\mathrm{f}},0)\in\mathbb{X}_{\mathrm{f}}$ $\forall x\in\mathbb{X}_{\mathrm{f}}{;}$
\label{term_invar}
\item (Constraint satisfaction) $\mathbb{X}_{\mathrm{f}}\subseteq\bar{\mathbb{X}}_{k}$, $k\in\mathbb{I}_{\geq 0}{;}$ 
\label{term_con}
\item (Lyapunov) 
$V_{\mathrm{f}}(f(x,u_{\mathrm{f}},0))\leq V_{\mathrm{f}}(x)-\ell(x,u_{\mathrm{f}})$ $\forall x\in\mathbb{R}^n$. 
\label{term_CLF}
\end{enumerate}
\end{assumption}
The global condition in~\ref{term_CLF} is in accordance with existing results for stochastic MPC for unbounded noise~\cite{chatterjee2014stability,Hewing2020indirect}, see also the discussion on necessity in~\cite{mayne2019stabilizing}. 
A constructive design satisfying Assumption~\ref{asm:terminal} is provided in Section~\ref{sec_term}. 

The following theorem derives the closed-loop properties of the proposed SMPC formulation. 
\begin{theorem}
\label{thm:SMPC}
Let Assumptions~\ref{asm:disturbance}, \ref{asm:setup_regularity}\ref{asm:setup_regularity_dynamics}--\ref{asm:setup_regularity_origin}, \ref{asm:PRS}, \ref{asm:PRS_nom}, and \ref{asm:terminal} hold. 
Suppose that Problem~\eqref{eq:SNMPC} is feasible at $k=0$. 
Then, Problem~\eqref{eq:SNMPC} is feasible and the chance constraints~\eqref{eq:prob_state} are satisfied for the resulting closed-loop system for all $k\in\mathbb{I}_{\geq 0}$. 
Furthermore, there exists a function $\sigma\in\mathcal{K}_\infty$, such that the closed-loop cost satisfies the following bound:
\begin{align}
\label{eq:SNMPC_performance_average}
\limsup_{T\rightarrow\infty}\mathbb{E}_{w(0:T-1)}\left[\dfrac{1}{T}\sum_{k=0}^{T-1}\ell(x(k),u(k))\right]\leq \sigma(\tr{\Sigma_{\mathrm{w}}}).
\end{align}
\end{theorem}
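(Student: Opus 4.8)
The plan is to follow the structure of the proof of Theorem~\ref{thm:SMPC_shrinking}: recursive feasibility and chance-constraint satisfaction are established by short arguments analogous to the shrinking-horizon case, while the average-cost bound~\eqref{eq:SNMPC_performance_average} is the genuinely new ingredient and receives the bulk of the effort.

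\emph{Recursive feasibility and chance constraints.} Since the nominal state~\eqref{eq:nominal} is driven by the applied input $u(k)=\mathbf{u}^\star_{k|k}$, the initialization~\eqref{eq:SNMPC_init_z} gives $\mathbf{z}_{k+1|k+1}=z(k+1)=f(z(k),u(k),0)=\mathbf{z}^\star_{k+1|k}$. I would then use the standard shifted candidate $\mathbf{u}_{k+1:k+N-1|k+1}=\mathbf{u}^\star_{k+1:k+N-1|k}$, $\mathbf{u}_{k+N|k+1}=u_{\mathrm{f}}$, whose nominal prediction coincides with $\mathbf{z}^\star_{\cdot|k}$ on $\mathbb{I}_{[k+1,k+N]}$, so the tightened constraints~\eqref{eq:SNMPC_tightened} remain satisfied; Assumption~\ref{asm:terminal}\ref{term_con} gives $\mathbf{z}^\star_{k+N|k}\in\mathbb{X}_{\mathrm{f}}\subseteq\bar{\mathbb{X}}_{k+N}$, and Assumption~\ref{asm:terminal}\ref{term_invar} yields the new terminal state $f(\mathbf{z}^\star_{k+N|k},u_{\mathrm{f}},0)\in\mathbb{X}_{\mathrm{f}}$. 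Crucially, feasibility depends on $z(k)$ only and not on the measured $x(k)$, hence it holds for every noise realization. For the chance constraint, feasibility of~\eqref{eq:SNMPC_tightened} at $i=k$ yields $z(k)\in\bar{\mathbb{X}}_k=\mathbb{X}\ominus\mathbb{D}_k$ almost surely, i.e.\ $\{z(k)\}\oplus\mathbb{D}_k\subseteq\mathbb{X}$. Assumption~\ref{asm:PRS_nom} then gives $\mathcal{R}_k(x_0,u(0:k-1))\subseteq\mathbb{X}$ a.s., and Definition~\ref{def:PRS} (applicable since the closed loop is a causal policy) implies $\prob{x(k)\in\mathbb{X}}\geq\prob{x(k)\in\mathcal{R}_k(x_0,u(0:k-1))}\geq p$.

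\emph{Cost bound.} Let $V_k$ denote the optimal value of~\eqref{eq:SNMPC}, which is a function of $(x(k),z(k))$ in which $x(k)$ enters only through the certainty-equivalent cost~\eqref{eq:SNMPC_cost}, whereas the feasible input set is determined by $z(k)$ alone. I would bound $V_{k+1}$ from above by the cost of the shifted candidate evaluated along the true state $x(k+1)$, and compare it to the same candidate evaluated along the noise-free one-step prediction $\mathbf{x}^\star_{k+1|k}$. For the latter the candidate trajectory coincides with $\mathbf{x}^\star_{\cdot|k}$, and Assumption~\ref{asm:terminal}\ref{term_CLF} gives the nominal decrease to $V_k-\ell(x(k),u(k))$. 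The two evaluations differ only in their initial condition, by the one-step error $e(k+1):=x(k+1)-\mathbf{x}^\star_{k+1|k}=f(x(k),u(k),w(k))-f(x(k),u(k),0)$. Using Lipschitz continuity of $f$ (Assumption~\ref{asm:regularity}), this perturbation propagates with a bounded gain over the finite horizon, and the quadratic cost gives a difference of the form $c_1(\|x(k)\|+1)\|e(k+1)\|+c_2\|e(k+1)\|^2$, where the linear cross term carries the predicted-trajectory magnitude, itself bounded by $\|x(k)\|$ plus a constant via the Lipschitz bound, compactness of $\mathbb{U}$, and $f(0,0,0)=0$. Taking the conditional expectation given $w(0:k-1)$ and using Assumption~\ref{asm:disturbance} ($\mathbb{E}\|w(k)\|\leq\sqrt{\tr{\Sigma_{\mathrm{w}}}}$, $\mathbb{E}\|w(k)\|^2\leq\tr{\Sigma_{\mathrm{w}}}$) together with $\|e(k+1)\|\leq L_w\|w(k)\|$, Young's inequality absorbs the state-weighted term $\propto\|x(k)\|\sqrt{\tr{\Sigma_{\mathrm{w}}}}$ into $\tfrac12\ell(x(k),u(k))$ via $\ell\geq\lambda_{\min}(Q)\|x(k)\|^2$, leaving a purely noise-dependent remainder $\sigma(\tr{\Sigma_{\mathrm{w}}})$ with $\sigma(r)=a\sqrt{r}+br\in\mathcal{K}_\infty$. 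This yields the descent inequality
\begin{align*}
\mathbb{E}[V_{k+1}\mid w(0:k-1)]\leq V_k-\tfrac{1}{2}\ell(x(k),u(k))+\sigma(\tr{\Sigma_{\mathrm{w}}}),
\end{align*}
which I would telescope by taking total expectations, summing over $k\in\mathbb{I}_{[0,T-1]}$, using $V_k\geq 0$ and finiteness of the deterministic $V_0$, dividing by $T$, and taking $\limsup_{T\to\infty}$ to obtain~\eqref{eq:SNMPC_performance_average} (after redefining $\sigma$).

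\emph{Main obstacle.} The difficulty lies entirely in the cross term generated by the nonlinearity of $f$ together with the quadratic stage cost: unlike the linear indirect-feedback setting, the error $e(k+1)$ does not enter additively and its conditional mean need not vanish, so the argument must retain the state-weighted cross term and dispose of it through Young's inequality at the cost of a $\sqrt{\tr{\Sigma_{\mathrm{w}}}}$ contribution. Ensuring that the resulting remainder is a valid $\mathcal{K}_\infty$ function of $\tr{\Sigma_{\mathrm{w}}}$ (rather than a constant offset that survives as $\Sigma_{\mathrm{w}}\to 0$) is the crux, and hinges on the predicted-state magnitude being controlled by $\|x(k)\|$ through the Lipschitz and equilibrium assumptions so that it can be dominated by the stage cost.
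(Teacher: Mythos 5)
Your proposal is correct, and while your recursive-feasibility and chance-constraint arguments coincide with the paper's (shifted candidate with terminal input $u_{\mathrm{f}}$, feasibility depending only on $z(k)$ and not on the noise, and $z(k)\in\bar{\mathbb{X}}_k$ combined with Assumption~\ref{asm:PRS_nom}, causality, and Definition~\ref{def:PRS}), your derivation of the expected descent inequality takes a genuinely different route. The paper applies the $\epsilon$-parameterized inequality~\eqref{eq:quad_epsilon} to the \emph{entire} candidate cost, producing the multiplicative bound~\eqref{eq:cost_decrease_one_step_intermediate}; the resulting term $\epsilon\mathcal{J}_N^\star(k)$ must then be controlled by a separate global upper bound $\mathcal{J}_N^\star(k)\leq c_\ell\,\ell(x(k),u(k))+c_{\mathcal{J},2}$, derived in~\eqref{eq:cost_upper_bound} via an auxiliary all-$u_{\mathrm{f}}$ input sequence, and the leftover constant $\epsilon c_{\mathcal{J},2}$ forces the noise-dependent choice $\epsilon=\min\{\sqrt{\tr{\Sigma_{\mathrm{w}}}},\tfrac{1}{2c_\ell}\}$ so that the remainder is a $\mathcal{K}_\infty$ function of $\tr{\Sigma_{\mathrm{w}}}$. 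You instead expand the quadratic cost differences directly, retain the cross term $\|x(k)\|\cdot\|w(k)\|$ (with the predicted-trajectory magnitude bounded by $c\|x(k)\|+c'$ using Lipschitz continuity, compactness of $\mathbb{U}$, and $f(0,0,0)=0$), and absorb it into $\tfrac{1}{2}\ell(x(k),u(k))$ by Young's inequality with a \emph{fixed} weight tied to $\underline{Q}$; your remainder $a\sqrt{\tr{\Sigma_{\mathrm{w}}}}+b\,\tr{\Sigma_{\mathrm{w}}}$ is then automatically $\mathcal{K}_\infty$, with no bound on the optimal cost and no parameter tuning required. Both proofs rest on the same structural facts—the globally valid Assumption~\ref{asm:terminal}\ref{term_CLF}, the lower bound $\ell(x,u)\geq\underline{Q}\|x\|^2$, Lipschitz $f$, compact $\mathbb{U}$—and both correctly avoid assuming any uniform bound on $x(k)$, which feasibility of~\eqref{eq:SNMPC} does not provide since constraints act only on the nominal state; indeed the two devices are cousins, since~\eqref{eq:quad_epsilon} is itself a Young-type inequality applied stage-wise. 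What the paper's route buys is that all estimates stay at the level of optimal costs, yielding the intermediate global bound~\eqref{eq:cost_upper_bound}, which has independent interest; what your route buys is a shorter, more elementary argument that dispenses with the auxiliary sequence and the $\Sigma_{\mathrm{w}}$-dependent choice of $\epsilon$, at the price of tracking explicit trajectory-magnitude constants through the finite horizon.
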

\begin{proof}
\textbf{Recursive feasibility:} Given an optimal input sequence $\mathbf{u}^\star_{k:k+N-1|k}$ at some time $k\in\mathbb{I}_{\geq 0}$, denote $\mathbf{u}^\star_{k+N|k}=u_{\mathrm{f}}\in\mathbb{U}$ and consider the candidate input sequence 
$\mathbf{u}_{i|k+1}=\mathbf{u}^\star_{i|k}$, $i\in\mathbb{I}_{[k+1,k+N]}$. 
The corresponding nominal state sequence is given by $\mathbf{z}_{i|k+1}=\mathbf{z}_{i|k}^\star$, $i\in\mathbb{I}_{[k+1,k+N]}$ using $u(k)=\mathbf{u}_{k|k}^\star$, \eqref{eq:SNMPC_init_z}, \eqref{eq:SNMPC_dynamics_nominal}, and \eqref{eq:nominal}. 
The noise $w(k)$ only affects the state $x(k+1)$ and hence $\mathbf{x}_{\cdot|k+1}$, which is not subject to any constraints. 
Feasibility of the candidate solution and thus recursive feasibility follows with standard nominal MPC arguments~\cite{rawlings2017model} using Assumptions~\ref{asm:terminal} \ref{term_invar}--\ref{term_con}, 
$\mathbf{z}_{k+N|k+1}=\mathbf{z}_{k+N|k}^\star\in\mathbb{X}_{\mathrm{f}}\subseteq\bar{\mathbb{X}}_{k+N}$, and 
$\mathbf{z}_{k+N+1|k+1}=f(\mathbf{z}_{k+N|k}^\star,u_{\mathrm{f}},0)\in\mathbb{X}_{\mathrm{f}}$.\\
\textbf{Closed-loop chance constraint satisfaction:}
Consider the chance constraint~\eqref{eq:prob_state} for some $k\in\mathbb{I}_{\geq 0}$ and $z(k)=\mathbf{z}_{k|k}^\star\in\bar{\mathbb{X}}_k$ from Problem~\eqref{eq:SNMPC}. 
Assumption~\ref{asm:PRS_nom} ensures
\begin{align*}
\mathcal{R}_k(x_0,u(0:k-1))\stackrel{\eqref{eq:PRS_nom}}{\subseteq} \{z(k)\}\oplus\mathbb{D}_k\stackrel{\eqref{eq:SNMPC_init_z},\eqref{eq:SNMPC_tightened}}{\subseteq} \bar{\mathbb{X}}_k\oplus\mathbb{D}_k\subseteq \mathbb{X}.
\end{align*}
The inputs $u(0:k-1)$ generated by the SMPC~\eqref{eq:SNMPC} correspond to a causal policy and hence Assumption~\ref{asm:PRS} yields
\begin{align*}
\prob{x(k)\in\mathbb{X}}\geq \prob{x(k)\in\mathcal{R}_k(x_0,u(0:k-1))}\stackrel{\eqref{eq:PRS}}{\geq} p,
\end{align*}
i.e., the chance constraint~\eqref{eq:prob_state} holds. \\
\textbf{Cost bound:} This proof consistent of three steps: \textbf{(i)} deriving a bound on the one step decrease using the candidate solution; \textbf{(ii)} bounding the optimal cost of Problem~\eqref{eq:SNMPC} for any $x(k)\in\mathbb{R}^n$; \textbf{(iii)} deriving the asymptotic expected cost bound~\eqref{eq:SNMPC_performance_average}.\\  
\textbf{(i):} 
We denote the state sequence satisfying~\eqref{eq:SNMPC_init_x}--\eqref{eq:SNMPC_dynamics_mean} with the candidate input $\mathbf{u}_{\cdot|k+1}$ by $\mathbf{x}_{\cdot|k+1}$ and define $\mathbf{x}_{k+N+1|k}^\star=f(\mathbf{x}_{k+N|k}^\star,\mathbf{u}^\star_{k+N|k},0)$. 
Lipschitz continuity (Asm.~\ref{asm:setup_regularity}) of the dynamics~\eqref{eq:SNMPC_init_x}--\eqref{eq:SNMPC_dynamics_mean} and \eqref{eq:sys} implies
\begin{align*}
&\|\mathbf{x}_{k+1|k+1}-\mathbf{x}_{k+1|k}^\star\|\leq L_{\mathrm{f}}\|w(k)\|,\\
&\|\mathbf{x}_{i|k+1}-\mathbf{x}_{i|k}^\star\|\leq L_{\mathrm{f}}^{i-k}\|w(k)\|,~i\in\mathbb{I}_{[k+1,k+N+1]},
\end{align*}
with Lipschitz constant $L_{\mathrm{f}}\geq 0$. 
For any $x,y\in\mathbb{R}^n$, any positive semi-definite matrix $S\in\mathbb{R}^{n\times n}$, and any $\epsilon>0$: 
\begin{align}
\label{eq:quad_epsilon}
\dfrac{1}{1+\epsilon}\|x+y\|_S^2\leq \|x\|_S^2+\dfrac{1}{\epsilon}\|y\|_S^2.
\end{align}
Hence, for any $\epsilon>0$, the quadratic cost (Asm.~\ref{asm:setup_regularity}) satisfies
\begin{align*}
&\dfrac{1}{1+\epsilon}\ell(\mathbf{x}_{i|k+1},\mathbf{u}_{i|k+1})\leq \ell(\mathbf{x}_{i|k}^\star,\mathbf{u}_{i|k}^\star)+\frac{1}{\epsilon}\bar{Q} L_{\mathrm{f}}^{i-k}\|w(k)\|^2,\\
&\dfrac{1}{1+\epsilon}V_{\mathrm{f}}(\mathbf{x}_{k+N+1|k+1})\leq V_{\mathrm{f}}(\mathbf{x}_{k+N+1|k}^\star)+\frac{1}{\epsilon}\bar{P}L_{\mathrm{f}}^{N+1}\|w(k)\|^2,
\end{align*}
where $\bar{P}\geq\bar{Q}>0$ are the maximal eigenvalues of $P,Q$, respectively. 
Let us denote the optimal cost of Problem~\eqref{eq:SNMPC} at time $k$ by $\mathcal{J}_N^\star(k)=\mathcal{J}_N(\mathbf{x}_{k:k+N|k}^\star,\mathbf{u}_{k:k+N-1|k}^\star)$. 
The feasible candidate solution implies
\begin{allowdisplaybreaks}
\begin{align}
\label{eq:cost_decrease_one_step_intermediate}
&\dfrac{1}{1+\epsilon}\mathcal{J}_N^\star(k+1)\nonumber\\
\leq &\dfrac{1}{1+\epsilon}\mathcal{J}_N(\mathbf{x}_{k+1:k+N+1|k+1},\mathbf{u}_{k+1:k+N|k+1})\nonumber\\
\leq&\sum_{i=k+1}^{k+N}\ell(\mathbf{x}_{i|k}^\star,\mathbf{u}_{i|k}^\star)+V_{\mathrm{f}}(\mathbf{x}_{k+N+1|k}^\star)\nonumber\\
&+\dfrac{1}{\epsilon}\|w(k)\|^2\left(\sum_{i=k+1}^{k+N} \bar{Q}L_{\mathrm{f}}^{i-k} +\bar{P} L_{\mathrm{f}}^{N+1}\right)\nonumber\\
\stackrel{\mathrm{Asm.~\ref{asm:terminal}\ref{term_CLF}}}{\leq} &\sum_{i=k+1}^{k+N-1}\ell(\mathbf{x}_{i|k}^\star,\mathbf{u}_{i|k}^\star)+V_{\mathrm{f}}(\mathbf{x}_{k+N|k}^\star)\nonumber\\
&+\dfrac{1}{\epsilon}\|w(k)\|^2\underbrace{\left(\sum_{i=1}^{N} \bar{Q} L_{\mathrm{f}}^{i} +\bar{P}L_{\mathrm{f}}^{N+1}\right)}_{=:c_{\mathcal{J}}}\nonumber\\
=&\mathcal{J}_N^\star(k)-\ell(x(k),u(k))+\dfrac{c_{\mathcal{J}}}{\epsilon}\|w(k)\|^2,
\end{align}
\end{allowdisplaybreaks}
with $x(k)=\mathbf{x}_{k|k}^\star$, $u(k)=\mathbf{u}_{k|k}^\star$. \\
\textbf{(ii):} Next, we derive an upper bound on $\mathcal{J}_N^\star(k)$ for any $x(k)\in\mathbb{R}^n$, using feasibility of Problem~\eqref{eq:SNMPC}. 
Notably, the feasibility of Problem~\eqref{eq:SNMPC} does not imply a uniform bound on the state $x(k)$, as constraints are only imposed on the nominal prediction $\mathbf{z}$. 
This is also the main reason why existing techniques, such as~\cite{mcallister2022nonlinear}, cannot be leveraged to derive bounds on $\mathcal{J}_N^\star(k)$. 
We define an auxiliary input sequence $\mathbf{\tilde{u}}_{i|k}=u_{\mathrm{f}}\in\mathbb{U}$, $i\in\mathbb{I}_{[k,k+N-1]}$, with the corresponding state sequence $\mathbf{\tilde{x}}_{i|k}$ according to~\eqref{eq:SNMPC_init_x}--\eqref{eq:SNMPC_dynamics_mean}.
The cost of this sequence satisfies
\begin{align}
\label{eq:cost_candidate_global}
\mathcal{J}_N(\mathbf{\tilde{x}}_{\cdot|k},\mathbf{\tilde{u}}_{\cdot|k})\leq V_{\mathrm{f}}(\mathbf{\tilde{x}}_{0|k})=\|x(k)\|_P^2
\end{align}
using the global properties of the terminal cost (Asm.~\ref{asm:terminal}~\ref{term_CLF}) recursively. 
Although this input sequence is not a feasible solution to Problem~\eqref{eq:SNMPC}, we can leverage it to bound the cost of the optimal solution. 
Using the bound $\bar{u}=\max_{u\in\mathbb{U}}\|u_{\mathrm{f}}-u\|$ and Lipschitz continuity, we have 
\begin{align*}
&\|\mathbf{x}^\star_{k+i|k}-\mathbf{\tilde{x}}_{k+i|k}\|\leq\sum_{j=0}^{i-1} L_{\mathrm{f}}^{i-j}\|\mathbf{u}_{k+j|k}^\star-u_{\mathrm{f}}\|\leq \bar{u}\sum_{j=0}^{i-1} L_{\mathrm{f}}^{i-j}.
\end{align*}
Applying~\eqref{eq:quad_epsilon} with $\epsilon=1$, we can bound the cost of the optimal sequence by 
\begin{align*}
&\ell(\mathbf{x}^\star_{i|k},\mathbf{u}_{i|k})\leq 2\ell(\mathbf{\tilde{x}}_{i|k},\mathbf{\tilde{u}}_{i|k})+2\bar{Q}\bar{u}^2\sum_{j=0}^{i-1} L_{\mathrm{f}}^{2(i-j)},\\
&V_{\mathrm{f}}(\mathbf{x}^\star_{k+N|k})\leq 2V_{\mathrm{f}}(\mathbf{\tilde{x}}_{k+N|k})+2\bar{P}\bar{u}^2\sum_{j=0}^{N-1} L_{\mathrm{f}}^{2(i-j)}.
\end{align*}
Thus, the optimal cost satisfies
\begin{align}
\label{eq:cost_upper_bound}
&\mathcal{J}_N^\star(k)=\mathcal{J}_N(\mathbf{x}^\star_{\cdot|k},\mathbf{u}^\star_{\cdot|k})\nonumber\\
\leq& 2\mathcal{J}_N(\mathbf{\tilde{x}}_{\cdot|k},\mathbf{\tilde{u}}_{\cdot|k})
+\underbrace{2\bar{u}^2\max\{\bar{Q},\bar{P}\}\sum_{i=0}^N\sum_{j=0}^{i-1} L_{\mathrm{f}}^{2(i-j)}}_{=:c_{\mathcal{J},2}}\nonumber\\
\stackrel{\eqref{eq:cost_candidate_global}}{\leq}& 2 \|x(k)\|_P^2+c_{\mathcal{J},2}\stackrel{\mathrm{Asm.}~\ref{asm:setup_regularity}}{\leq} c_{\ell}\ell(x(k),u(k))+c_{\mathcal{J},2},
\end{align}
with $c_\ell:=2\bar{P}/\underline{Q}>1$, where $\underline{Q}>0$ is the smallest eigenvalue of $ Q\succ 0$. \\
\textbf{(iii):} 
Inequalities~\eqref{eq:cost_decrease_one_step_intermediate} and \eqref{eq:cost_upper_bound} yield
\begin{align}
\label{eq:expected_decrease_one_step_almost_final}
&\mathcal{J}_N^\star(k+1)-\mathcal{J}_N^\star(k)\\
\stackrel{\eqref{eq:cost_decrease_one_step_intermediate}}{\leq}&\epsilon\mathcal{J}_N^\star(k)-(1+\epsilon)\ell(x(k),u(k))+\dfrac{1+\epsilon}{\epsilon}c_{\mathcal{J}}\|w(k)\|^2\nonumber\\
\stackrel{\eqref{eq:cost_upper_bound}}{\leq} &-\ell(x(k),u(k))(1-\epsilon c_\ell) + \epsilon c_{\mathcal{J},2} + \dfrac{1+\epsilon}{\epsilon}c_{\mathcal{J}}\|w(k)\|^2.\nonumber
\end{align}
Consider $\epsilon=\min\{\sqrt{\tr{\Sigma_{\mathrm{w}}}},\frac{1}{2c_\ell}\}\in(0,1)$, which satisfies 
$(1-\epsilon c_\ell)\geq \frac{1}{2}$, $\epsilon\leq \sqrt{\tr{\Sigma_{\mathrm{w}}}}$, and 
\begin{align}
\label{eq:proof_bounds_epsilon}
 \dfrac{1+\epsilon}{\epsilon}\tr{\Sigma_{\mathrm{w}}}\leq \dfrac{2}{\epsilon}\tr{\Sigma_{\mathrm{w}}}=2\max\left\{\sqrt{\tr{\Sigma_{\mathrm{w}}}},\frac{\tr{\Sigma_{\mathrm{w}}}}{2 c_\ell}\right\}.
\end{align}
This yields
\begin{align*}
&\mathbb{E}_{w(k)}\left[\mathcal{J}_N^\star(k+1)-\mathcal{J}_N^\star(k) \phantom{\frac{1}{2}}\right.\\
&\phantom{\mathbb{E}_{w(k)}[}\left.+\frac{1}{2}\ell(x(k),u(k))\mid w(0:k-1)\right]\\
\stackrel{\eqref{eq:expected_decrease_one_step_almost_final}}{\leq} &\mathbb{E}_{w(k)}\left[\epsilon c_{\mathcal{J},2} +\dfrac{1+\epsilon}{\epsilon}c_{\mathcal{J}}\|w(k)\|^2 \mid w(0:k-1)\right]\\
\stackrel{\mathrm{Asm.}~\ref{asm:disturbance}}{\leq} &\epsilon c_{\mathcal{J},2} +\dfrac{1+\epsilon}{\epsilon}c_{\mathcal{J}} \tr{\Sigma_{\mathrm{w}}}\\
\stackrel{\eqref{eq:proof_bounds_epsilon}}{\leq }& c_{\mathcal{J},2}  \sqrt{\tr{\Sigma_{\mathrm{w}}}}+2c_{\mathcal{J}}\max\left\{\sqrt{\tr{\Sigma_{\mathrm{w}}}},\frac{\tr{\Sigma_{\mathrm{w}}}}{2 c_\ell}\right\}\\
=:&\tilde{\sigma}(\tr{\Sigma_{\mathrm{w}}}),\quad \tilde{\sigma}\in\mathcal{K}_\infty.
\end{align*}
Applying the law of iterated expectation for $k\in\mathbb{I}_{[0,T-1]}$ yields
\begin{align*}
0&\leq \mathbb{E}_{w(0:T-1)}[\mathcal{J}_N^\star(T)]\\
\leq &\mathcal{J}_N^\star(0)+T\tilde{\sigma}(\tr{\Sigma_{\mathrm{w}}})-\frac{1}{2}\mathbb{E}_{w(0:T-1)}\left[\sum_{k=0}^{T-1}\ell(x(k),u(k))\right],
\end{align*}
where the first inequality used non-negativity of the cost. 
Given $x(0)=z(0)\in\mathbb{R}^n$, $f$ Lipschitz continuous, $\mathbb{U}$ compact, and $\mathcal{J}_N$ quadratic, we have $\mathcal{J}_N^\star(0)<\infty$.
Thus, dividing by $T$ and taking the limit $T\rightarrow\infty$ yields~\eqref{eq:SNMPC_performance_average} with $\sigma=2\tilde{\sigma}\in\mathcal{K}_\infty$. 
\end{proof}
Overall, Theorem~\ref{thm:SMPC} provides all the desired closed-loop guarantees: recursive feasibility~\ref{enum_desired_feas}, satisfaction of chance constraints~\ref{enum_desired_chance}, and a bound on the expected cost~\ref{enum_desired_cost}. 
Notably, we ensure recursive feasibility regardless of the magnitude of the noise realizations during closed-loop operation. 
The fact that the asymptotic expected cost scales with the variances mirrors existing results in SMPC~\cite{mcallister2022nonlinear,Hewing2020indirect}, see Section~\ref{sec:discussion} for a more detailed discussion.
In the next section, we discuss how to design the PRS (Asm.~\ref{asm:PRS}--\ref{asm:PRS_nom}).

 \section{Probabilistic reachable sets using stochastic contraction metrics}
\label{sec:CCM}
In this section, we show how to construct a PRS satisfying Assumptions~\ref{asm:PRS}-\ref{asm:PRS_nom} using contraction metrics. 
First, we use stochastic contraction metrics to bound the expected prediction error (Sec.~\ref{sec:CCM_1}) and then derive a PRS (Sec.~\ref{sec:CCM_2}). 

\subsection{Stochastic contraction metrics}
\label{sec:CCM_1}
Contraction metrics utilize conditions on the Jacobian of the nonlinear dynamics $f$ to ensure incremental stability properties for the nonlinear system~\cite{tsukamoto2020robust}. 
Given that the dynamics $f$ are continuous differentiable and linear in $w$ (Asm.~\ref{asm:setup_regularity}\ref{asm:setup_regularity_CCM}), 
we denote the Jacobian of the dynamics $f$ by 
\begin{align*}
A(x,u):=\left.\dfrac{\partial f}{\partial x}\right|_{(x,u,w)},
\quad 
G=\left.\dfrac{\partial f}{\partial w}\right|_{(x,u,w)}.
\end{align*}
The following theorem derives bounds on the expected error using stochastic contraction metrics. 
\begin{theorem}
\label{thm:CCM}
Let Assumptions~\ref{asm:disturbance} and \ref{asm:setup_regularity}\ref{asm:setup_regularity_CCM} hold. 
Consider a state-dependent matrix $M:\mathbb{R}^n\rightarrow\mathbb{R}^{n\times n}$, which satisfies
\begin{subequations}
\label{eq:CCM_all_4}
\begin{align}
\label{eq:CCM_anhiling}
\dfrac{\partial}{\partial w} M(x+G\cdot w)=0 \quad \forall (x,w)\in\mathbb{R}^n\times\mathbb{R}^q.
\end{align}
Suppose there exist positive definite matrices $\underline{M},\overline{M}\in\mathbb{R}^{n\times n}$, a contraction rate $\rho\in[0,1)$, and a constant $\bar{w}\geq 0$, such that the following conditions hold for all $x\in\mathbb{R}^n$, $u\in\mathbb{U}$:
\begin{align}
\label{eq:CCM_M_bounds}
 &\underline{M}\preceq M(x)\preceq \overline{M},\\
\label{eq:CCM_rho}
& A(x,u)^\top M(f(x,u,0)) A(x,u)\preceq \rho M(x),\\
\label{eq:CCM_w}
&\tr{\Sigma_{\mathrm{w}}G^\top \bar{M} G}\leq \bar{w}.
\end{align}
\end{subequations}
Then, there exists an incremental Lyapunov function $V_\delta:\mathbb{R}^n\times\mathbb{R}^n\rightarrow\mathbb{R}$, such that 
for any $x,z\in\mathbb{R}^n$, $u\in\mathbb{U}$, $k\in\mathbb{I}_{\geq 0}$: 
\begin{subequations}
\label{eq:increm_bounds}
\begin{align}
\label{eq:increm_bounds_1}
\|x-z\|_{\underline{M}}^2\leq V_\delta(x,z)\leq &\|x-z\|_{\overline{M}}^2,\\
\label{eq:increm_bounds_2}
\mathbb{E}_{w(k)}\left[V_\delta(f(x,u,w(k)),f(z,u,0))\right]\leq &\rho V_\delta(x,z)+\bar{w}.
\end{align}
\end{subequations}
\end{theorem}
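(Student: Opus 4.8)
The plan is to construct $V_\delta$ explicitly as the Riemannian energy induced by the metric $M$,
\[
V_\delta(x,z):=\inf_{\gamma}\int_0^1\left\|\frac{\partial\gamma}{\partial s}(s)\right\|_{M(\gamma(s))}^2\,\mathrm{d}s,
\]
where the infimum runs over continuously differentiable paths $\gamma:[0,1]\to\mathbb{R}^n$ with $\gamma(0)=z$, $\gamma(1)=x$, and I abbreviate $\gamma_s:=\partial\gamma/\partial s$. The uniform bounds \eqref{eq:CCM_M_bounds} and continuity of $M$ guarantee a minimizing geodesic (or one may work with $\epsilon$-suboptimal paths and let $\epsilon\to0$). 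The two-sided bound \eqref{eq:increm_bounds_1} is then immediate: the upper bound follows by evaluating the energy along the straight segment $\gamma(s)=z+s(x-z)$ and using $M(\cdot)\preceq\overline{M}$; the lower bound follows since $M(\cdot)\succeq\underline{M}$ gives, for every path, $\int_0^1\|\gamma_s\|_{M(\gamma)}^2\,\mathrm{d}s\ge\int_0^1\|\gamma_s\|_{\underline M}^2\,\mathrm{d}s\ge(\int_0^1\|\gamma_s\|_{\underline M}\,\mathrm{d}s)^2\ge\|x-z\|_{\underline M}^2$, where the middle step is Cauchy--Schwarz and the last is the triangle inequality in the constant norm $\|\cdot\|_{\underline M}$; taking the infimum yields the claim.

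The core of the argument is \eqref{eq:increm_bounds_2}. Let $\gamma^\star$ be a minimizing geodesic from $z$ to $x$ and consider its nominal image $\tilde\gamma(s):=f(\gamma^\star(s),u,0)$, a path from $f(z,u,0)$ to $f(x,u,0)$ whose tangent is $\tilde\gamma_s=A(\gamma^\star,u)\gamma^\star_s$ by the chain rule. The contraction condition \eqref{eq:CCM_rho} gives pointwise $\|\tilde\gamma_s\|_{M(\tilde\gamma)}^2=\gamma^\star_s{}^\top A^\top M(f(\gamma^\star,u,0))A\,\gamma^\star_s\le\rho\,\|\gamma^\star_s\|_{M(\gamma^\star)}^2$, so integration yields $\int_0^1\|\tilde\gamma_s\|_{M(\tilde\gamma)}^2\,\mathrm{d}s\le\rho V_\delta(x,z)$. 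To reach the noisy endpoint $f(x,u,w)=f(x,u,0)+Gw$, I use the perturbed path $\eta(s):=\tilde\gamma(s)+s\,Gw$, which connects $f(z,u,0)$ to $f(x,u,w)$. Here the annihilation condition \eqref{eq:CCM_anhiling} is essential: it forces $M(\eta(s))=M(\tilde\gamma(s)+sGw)=M(\tilde\gamma(s))$, so the metric along $\eta$ is unaffected by the noise shift and $\|\eta_s\|_{M(\eta)}^2=\|\tilde\gamma_s+Gw\|_{M(\tilde\gamma)}^2$.

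Expanding this square and taking the expectation over $w$, the cross term $2\,\tilde\gamma_s^\top M(\tilde\gamma)Gw$ vanishes, since $\tilde\gamma$ is the \emph{nominal} image and hence independent of $w$, while $\mathbb{E}_w[w]=0$ by Assumption~\ref{asm:disturbance}. The remaining noise contribution is $\mathbb{E}_w[\|Gw\|_{M(\tilde\gamma)}^2]=\tr{G^\top M(\tilde\gamma)G\,\mathbb{E}_w[ww^\top]}\le\tr{\Sigma_{\mathrm{w}}G^\top\overline{M}G}\le\bar w$, using $\mathbb{E}_w[ww^\top]\preceq\Sigma_{\mathrm{w}}$, $M(\tilde\gamma)\preceq\overline{M}$, monotonicity of the trace on positive semidefinite matrices, and \eqref{eq:CCM_w}. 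Since $\eta$ is a feasible path, $V_\delta(f(x,u,w),f(z,u,0))\le\int_0^1\|\eta_s\|_{M(\eta)}^2\,\mathrm{d}s$; taking expectations and combining with the contraction bound above gives exactly $\mathbb{E}_w[V_\delta(f(x,u,w),f(z,u,0))]\le\rho V_\delta(x,z)+\bar w$.

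The main obstacle is choosing the right competitor path for the noisy endpoint so that $\rho$ multiplies precisely $V_\delta(x,z)$ with no residual cross terms or $(1{+}\epsilon)$-type blow-up. The resolution hinges on two ingredients acting together: condition \eqref{eq:CCM_anhiling} keeps the metric constant along the additive noise direction $Gw$, and injecting the noise as the linear-in-$s$ displacement $s\,Gw$ of the $w$-independent nominal geodesic renders the cross term mean-zero. A secondary, purely technical point is the existence and regularity of minimizing geodesics, which is handled by the uniform bounds \eqref{eq:CCM_M_bounds} and continuity of $M$, or avoided via the $\epsilon$-suboptimal path argument.
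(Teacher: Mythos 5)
Your proposal is correct and follows essentially the same route as the paper: your perturbed path $\eta(s)=f(\gamma^\star(s),u,0)+s\,Gw$ is exactly the paper's candidate curve $\gamma^+(s)=f(\gamma^\star(s),u,s\,w(k))$ (by linearity of $f$ in $w$), and the same three-term expansion is used, with \eqref{eq:CCM_rho} handling the contraction term, the trace bound \eqref{eq:CCM_w} handling the noise term, and \eqref{eq:CCM_anhiling} together with zero-mean noise annihilating the cross term. The only differences are cosmetic: you spell out the lower bound in \eqref{eq:increm_bounds_1} and make explicit the role of \eqref{eq:CCM_anhiling} in justifying the contraction step, both of which the paper handles by citation or implicitly.
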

\begin{proof}
We first define $V_\delta$ and show the bounds~\eqref{eq:increm_bounds_1} before deriving the decrease condition~\eqref{eq:increm_bounds_2}.\\
\textbf{Part (i): }Let us denote the set of piece-wise smooth curves $\gamma:[0,1]\rightarrow\mathbb{R}^n$ that satisfy $\gamma(0)=z$ and $\gamma(1)=x$ by $\Gamma(z,x)$. 
We define the incremental Lyapunov function $V_\delta$ as the Riemannian energy corresponding to $M(x)$, i.e.,
\begin{align}
\label{eq:increm_V_delta_def}
V_\delta(x,z):=\min_{\gamma\in\Gamma(z,x)}\int_0^1 \left.\dfrac{\partial \gamma}{\partial s}\right|_s^\top M(\gamma(s)) \left.\dfrac{\partial \gamma}{\partial s}\right|_s~\mathrm{d}s.
\end{align}
A minimizer is called a geodesic $\gamma^\star$, which exists due to the uniform lower bound~\eqref{eq:CCM_M_bounds}, see~\cite[Lemma~1]{manchester2017control}. 
Inequalities~\eqref{eq:increm_bounds_1} follow from~\eqref{eq:CCM_M_bounds} using standard arguments, see, e.g.,~\cite[Equ.~(49)--(51)]{schiller2023lyapunov}. \\
\textbf{Part (ii): }
Denote $x^+=f(x,u,w(k))$, $z^+=f(z,u,0)$. 
We derive a bound on $V_\delta(x^+,z^+)$ using the candidate curve $\gamma^+(s)=f(\gamma^\star(s),u,\gamma_w(s))$ with $\gamma_w(s)=s\cdot w(k)$. 
With some abuse of notation, we abbreviate 
$A(s)=A(\gamma^\star(s),u)$, $M(s)=M(\gamma^\star(s))$, 
$M^+(s)=M(\gamma^+(s))$. 
We denote the derivative of the geodesic by $\gamma^\star_{\mathrm{s}}(s)=\left.\dfrac{\partial \gamma^\star}{\partial s}\right|_s$. The derivate of the candidate curve $\gamma^+$ satisfies
\begin{align}
\label{eq:geodesic_candidate_derivative}
\gamma_{\mathrm{s}}^+(s):=\left.\frac{\partial \gamma^+}{\partial s}\right|_s=A(s)\gamma^\star_{\mathrm{s}}(s)+G w(k).
\end{align} 
Note that $\gamma^\star\in\Gamma(z,x)$ implies that $\gamma^+\in\Gamma(z^+,x^+)$ with
$\gamma^+(1)=x^+$, $\gamma^+(0)=z^+$. Hence, the energy of the curve $\gamma^+$ provides an upper bound to the Riemannian energy $V_\delta(x^+,z^+)$ according to~\eqref{eq:increm_V_delta_def}. 
Thus, we have: 
\begin{align*}
& \mathbb{E}_{w(k)}\left[V_\delta(x^+,z^+)\right]
\leq \mathbb{E}_{w(k)}\left[\int_0^1 \gamma^+_{\mathrm{s}}(s)^\top M^+(s) \gamma^+_{\mathrm{s}}(s) \mathrm{d}s\right]\\
\stackrel{\eqref{eq:geodesic_candidate_derivative}}{=}& \mathbb{E}_{w(k)}\left[\int_0^1 \left\|A(s)\gamma^\star_{\mathrm{s}}+G w(k)\right\|_{M^+(s)}^2 \mathrm{d}s\right]\\
=& \mathbb{E}_{w(k)}\left[\int_0^1 
\left\|A(s)\gamma^\star_{\mathrm{s}}(s)\right\|_{M^+(s)}^2 
+\left\|G w(k)\right\|_{M^+(s)}^2 \right.\\
&\left.\phantom{\mathbb{E}_{w(k)}[\int_0^1} +2(A(s)\gamma^\star_{\mathrm{s}}(s))^\top M^+(s) G  w(k) 
\mathrm{d}s\right].
\end{align*}
Next, we bound each of the three terms individually. 
Condition~\eqref{eq:CCM_anhiling} ensures that $M^+(s)$ is independent of $w(k)$, which will be crucial when taking the expectation later. 
For the first term, it holds that
\begin{align*}
&\mathbb{E}_{w(k)}\left[\int_0^1\|A(s)\gamma^\star_{\mathrm{s}}(s)\|_{M^+(s)}^2 \mathrm{d}s\right]\\
\stackrel{\eqref{eq:CCM_rho}}{\leq}& \mathbb{E}_{w(k)}\left[\int_0^1 \rho \|\gamma^\star(s)\|_{M(s)}^2\mathrm{d}s\right]\stackrel{\eqref{eq:increm_V_delta_def}}{=}\rho V_\delta(x,z). \nonumber 
\end{align*} 
For the second term, we get 
\begin{align*}
&\mathbb{E}_{w(k)}\left[\int_0^1\|Gw(k)\|_{M^+(s)}^2 \mathrm{d}s\right]\\
\stackrel{\eqref{eq:CCM_M_bounds}}{\leq}& \mathbb{E}_{w(k)}\left[\int_0^1\| G w(k)\|_{\bar{M}}^2 \mathrm{d}s\right]\\
=&\mathbb{E}_{w(k)}\left[\int_0^1\tr{w(k)w(k)^\top G^\top \bar{M} G} \mathrm{d}s\right]\nonumber\\
=&\int_0^1\tr{\mathbb{E}_{w(k)}\left[w(k)w(k)^\top\right] G^\top \bar{M} G} \mathrm{d}s\nonumber\\
\stackrel{\text{Asm.~}\ref{asm:disturbance}}{\leq}& \int_0^1 \tr{\Sigma_{\mathrm{w}} G^\top \bar{M}G}\mathrm{d}s\stackrel{\eqref{eq:CCM_w}}{\leq}\int_0^1 \bar{w}\mathrm{d}s=\bar{w}.\nonumber
\end{align*}
where the first equality used the cyclic property of the trace and the second equality used linearity of the trace operator. 
For the last term, we get
\begin{align*}
&\mathbb{E}_{w(k)}\left[\int_0^1 2 (A(s)\gamma^\star_{\mathrm{s}}(s))^\top M^+(s) G w(k) \right]~\mathrm{d}s\\
\stackrel{\eqref{eq:CCM_anhiling}}{ =}&\int_0^1 2 (A(s)\gamma^\star_{\mathrm{s}}(s))^\top M^+(s) G \mathbb{E}_{w(k)}[w(k)]~\mathrm{d} s \stackrel{\mathrm{Asm.}~\ref{asm:disturbance}}{=} 0,
\end{align*}
where the first equality leverages the fact that $M^+(s)$ is independent of $w(k)$ due to~\eqref{eq:CCM_anhiling}. 
In combination, these three bounds yield~\eqref{eq:increm_bounds_2}.
\end{proof}
By imposing uniform contraction conditions~\eqref{eq:CCM_all_4} on the Jacobian, Theorem~\ref{thm:CCM} derives bounds on the nonlinear stochastic dynamics~\eqref{eq:sys} without linearization errors, as standard for contraction metrics~\cite{tsukamoto2020robust}. 
Condition~\eqref{eq:CCM_anhiling} restricts the parametrization of $M(x)$. 
Given any matrix $\bar{G}$ that is orthogonal to $G$, Condition~\eqref{eq:CCM_anhiling} is satisfied by choosing $M(x)=\tilde{M}(\bar{G}x)$ with an arbitrarily parametrized matrix $\tilde{M}$. 
In Appendix~\ref{app_CCM}, we provide additional details on the computation of a suitable contraction metric $M$ satisfying~\eqref{eq:CCM_all_4} using linear matrix inequalities (LMIs). 
The following corollary highlights how the result simplifies in the special case of a constant contraction metric $M$. 
\begin{corollary}
\label{cor:CCM_stoch}
Let Assumptions~\ref{asm:disturbance} and \ref{asm:setup_regularity}\ref{asm:setup_regularity_CCM} hold. 
Suppose there exists a positive definite matrix $M\in\mathbb{R}^{n\times n}$, a contraction rate $\rho\in[0,1)$, and a constant $\bar{w}\geq 0$, such that the following conditions hold for all $x\in\mathbb{R}^n$, $u\in\mathbb{U}$:
\begin{subequations}
\label{eq:CCM_all_4_simplified}
\begin{align}
\label{eq:CCM_rho_simplified}
A(x,u)^\top M A(x,u)\preceq& \rho M,\\
\label{eq:CCM_w_simplified}
\tr{\Sigma_{\mathrm{w}}G^\top M G}\leq &\bar{w}.
\end{align}
\end{subequations}
Then, Conditions~\eqref{eq:increm_bounds} hold with $V_\delta(x,z)=\|x-z\|_M^2$. 
\end{corollary}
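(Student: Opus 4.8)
The plan is to obtain this statement as the constant-metric specialization of Theorem~\ref{thm:CCM}; the only genuine work is to check that the Riemannian energy~\eqref{eq:increm_V_delta_def} collapses to the weighted norm $\|x-z\|_M^2$ when the metric does not vary with the state. Concretely, I would set $M(x)\equiv M$ and choose $\underline{M}=\overline{M}=M$ in Theorem~\ref{thm:CCM}. Then the annihilation condition~\eqref{eq:CCM_anhiling} holds automatically because $M(x+Gw)=M$ is independent of $w$; the sandwich bound~\eqref{eq:CCM_M_bounds} holds with equality; the contraction inequality~\eqref{eq:CCM_rho} reduces to exactly~\eqref{eq:CCM_rho_simplified} (with $M(x)=M$ on the right-hand side); and the noise bound~\eqref{eq:CCM_w} reduces to exactly~\eqref{eq:CCM_w_simplified}. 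Hence all hypotheses of Theorem~\ref{thm:CCM} are satisfied.

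It then remains to identify the $V_\delta$ furnished by Theorem~\ref{thm:CCM} with $\|x-z\|_M^2$. For constant $M$ the energy functional in~\eqref{eq:increm_V_delta_def} is $\int_0^1 \gamma_{\mathrm{s}}^\top M \gamma_{\mathrm{s}}\,\mathrm{d}s$ with fixed endpoints $\gamma(0)=z$, $\gamma(1)=x$; its Euler--Lagrange equation is $M\ddot\gamma=0$, so the geodesic is the straight segment $\gamma^\star(s)=z+s(x-z)$ with constant derivative $\gamma^\star_{\mathrm{s}}=x-z$. Substituting yields $V_\delta(x,z)=\int_0^1\|x-z\|_M^2\,\mathrm{d}s=\|x-z\|_M^2$. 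With this identification the bounds~\eqref{eq:increm_bounds_1} become the trivial equality $\|x-z\|_M^2=\|x-z\|_M^2$, and~\eqref{eq:increm_bounds_2} is precisely the conclusion of Theorem~\ref{thm:CCM}.

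As a self-contained alternative that sidesteps the geodesic machinery, I would argue directly. Writing $x^+=f(x,u,0)+Gw(k)$, $z^+=f(z,u,0)$, and $\Delta:=f(x,u,0)-f(z,u,0)=\int_0^1 A(z+s(x-z),u)\,\mathrm{d}s\,(x-z)$ (using continuous differentiability, Assumption~\ref{asm:cont_diff}), one expands $\|x^+-z^+\|_M^2=\|\Delta\|_M^2+2\Delta^\top M G w(k)+\|Gw(k)\|_M^2$. Taking $\mathbb{E}_{w(k)}[\cdot]$ annihilates the cross term since $w(k)$ is zero mean and $\Delta$ is deterministic, and $\mathbb{E}_{w(k)}[\|Gw(k)\|_M^2]=\tr{\mathbb{E}_{w(k)}[w(k)w(k)^\top]G^\top M G}\leq\tr{\Sigma_{\mathrm{w}}G^\top M G}\leq\bar{w}$ by Assumption~\ref{asm:disturbance} and~\eqref{eq:CCM_w_simplified}. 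The deterministic term is handled by the integral triangle inequality together with the pointwise bound $\|A(y,u)v\|_M^2=v^\top A(y,u)^\top M A(y,u) v\leq\rho\|v\|_M^2$ from~\eqref{eq:CCM_rho_simplified}, giving $\|\Delta\|_M\leq\int_0^1\sqrt{\rho}\,\|x-z\|_M\,\mathrm{d}s=\sqrt{\rho}\,\|x-z\|_M$, hence $\|\Delta\|_M^2\leq\rho\,V_\delta(x,z)$, which delivers~\eqref{eq:increm_bounds_2}.

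The one place I would be careful is exactly this last passage from the pointwise Jacobian contraction~\eqref{eq:CCM_rho_simplified} to the bound on the finite difference $\Delta$ along the segment, i.e.\ justifying $\|\int_0^1 g(s)\,\mathrm{d}s\|_M\leq\int_0^1\|g(s)\|_M\,\mathrm{d}s$ for $g(s)=A(z+s(x-z),u)(x-z)$ and applying~\eqref{eq:CCM_rho_simplified} uniformly in $s$. This is precisely the step that the \emph{constant} metric lets me carry out with a straight-line integration curve rather than a general minimizing geodesic, so no difficulty arises beyond what the proof of Theorem~\ref{thm:CCM} already resolves.
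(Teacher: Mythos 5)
Your first route is exactly the paper's proof: verify that the constant metric trivially satisfies \eqref{eq:CCM_anhiling} and \eqref{eq:CCM_M_bounds} with $\underline{M}=\overline{M}=M$, that conditions \eqref{eq:CCM_all_4_simplified} coincide with \eqref{eq:CCM_rho}--\eqref{eq:CCM_w} for constant $M$, and that the geodesic is the straight segment so the Riemannian energy \eqref{eq:increm_V_delta_def} collapses to $\|x-z\|_M^2$. Your self-contained alternative is also sound --- it simply inlines the proof of Theorem~\ref{thm:CCM} along the straight line, with the cross term killed by zero-mean noise and $\|\Delta\|_M^2\leq\rho\|x-z\|_M^2$ following from \eqref{eq:CCM_rho_simplified} and the integral triangle inequality --- so there is no gap in either version.
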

\begin{proof}
Conditions~\eqref{eq:CCM_anhiling} and~\eqref{eq:CCM_M_bounds} hold trivially with $\bar{M}=\underline{M}=M$. 
Furthermore, Conditions~\eqref{eq:CCM_all_4_simplified} are equivalent to~\eqref{eq:CCM_rho}--\eqref{eq:CCM_w} for $M$ constant.
Lastly, the constant metric $M$ ensures that the geodesic (shortest path) is a straight line, i.e., $\gamma^\star(s)=z+s(x-z)$. 
Hence, the Riemannian energy~\eqref{eq:increm_V_delta_def} reduces to the weighted norm $\|x-z\|_M^2$. 
\end{proof}
\begin{remark}
\label{rk:CCM_stochastic} (Related results on contraction metrics)
For nonlinear dynamics with bounded model-mismatch, robust contraction metrics can be leveraged to derive robust reachable sets around nominal trajectories $z$~\cite{zhao2022tube,singh2023robust,sasfi2023robust}. 
Compared to Theorem~\ref{thm:CCM}, these conditions leverage worst-case bounds instead of the variance bound $\Sigma_{\mathrm{w}}$ (cf. \eqref{eq:CCM_w} vs. \cite[Equ.~(A.7)]{sasfi2023robust}). 
The key difference in the result is the establishment of a deterministic bound in case of bounded model-mismatch compared to the developed expected bound for unbounded stochastic process noise. \\
Considering results for \textit{stochastic} contraction metrics: 
For continuous-time systems and state dependent metrics $M(x)$, bounds on the expected error are derived in~\cite{dani2014observer} and \cite[Thm.~1]{tsukamoto2020robust} using bounds on the derivative of $M(x)$. 
However, these continuous-time results cannot be directly applied to the considered stochastic discrete-time system~\eqref{eq:sys}, where condition~\eqref{eq:CCM_rho} is highly nonlinear in $w$. 
For discrete-time stochastic systems, \cite[Thm.~2]{tsukamoto2020robust} derives a valid expected bound of the form~\eqref{eq:increm_bounds} without requiring Condition~\eqref{eq:CCM_anhiling}.
However, the result can only ensure contraction of the nonlinear system if the condition number of $M$ is sufficiently close to one, thus limiting practical applicability. 
Discrete-time stochastic contraction metrics are also studied in the preprint~\cite{pham2013stochastic}. 
However, the results do not apply to state-dependent metrics $M(x)$ due to the induced correlation with $M(f(x,u,w))$, which we resolved though Condition~\eqref{eq:CCM_anhiling}. 
Overall, results for stochastic contraction metrics in discrete-time comparable to Theorem~\ref{thm:CCM} have been lacking in the literature, see also the overview paper~\cite{tsukamoto2021contraction}. 
Notably, results for stochastic contraction metrics typically rely on linearity in the noise $w$ and Appendix~\ref{app:affine_w} shows how to remove this requirement using a simple lifting.
\end{remark}

\subsection{Probabilist reachable set}
\label{sec:CCM_2}
In the following, we derive a PRS (Def.~\ref{def:PRS}) using the expected bounds of the stochastic contraction metrics (Thm.~\ref{thm:CCM}). 
Due to the nonlinear dynamics, the error $x(k)-z(k)$ does not follow any know distribution and is not even zero-mean. 
Hence, many standard inequalities from the literature on linear SMPC, such as the Chebyshev inequality~\cite{Farina2016soverview,Hewing2020indirect}, cannot be applied. 
The following theorem provides a valid PRS by combing the stochastic contraction metrics (Thm.~\ref{thm:CCM}) with the Markov inequality. 
\begin{theorem}
\label{thm:PRS}
Suppose the conditions in Theorem~\ref{thm:CCM} hold. 
Then, Assumptions~\ref{asm:PRS}--\ref{asm:PRS_nom} hold with
\begin{align}
\label{eq:PRS_design}
&\mathbb{D}_k=\left\{x|~\|x\|_{\underline{M}}^2\leq\sigma_{\mathrm{x},k}\right\},\quad 
\sigma_{\mathrm{x},k}:=\dfrac{1-\rho^k}{1-\rho}\dfrac{\bar{w}}{1-p}\nonumber\\
&\mathcal{R}_k(x_0,u(0:k-1))=z(k)\oplus\mathbb{D}_k,~k\in\mathbb{I}_{\geq 0}, 
\end{align}
with $z(k)$ according to~\eqref{eq:nominal}.
\end{theorem}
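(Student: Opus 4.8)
The plan is to establish the PRS property by first propagating the expected incremental Lyapunov function forward in time using Theorem~\ref{thm:CCM}, and then applying Markov's inequality to convert the expected bound into the desired probabilistic containment. First I would set $e(k):=V_\delta(x(k),z(k))$, where $x(k)$ is the true closed-loop state driven by noise and $z(k)$ is the nominal state from~\eqref{eq:nominal}, both under the same input sequence $u(0:k-1)$. Since the true and nominal dynamics share the same input and the nominal dynamics use $w\equiv 0$, the decrease condition~\eqref{eq:increm_bounds_2} applies at each step with $x=x(k)$, $z=z(k)$, $u=u(k)$, giving $\mathbb{E}_{w(k)}[e(k+1)\mid w(0:k-1)]\leq \rho\, e(k)+\bar{w}$.

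Next I would take full expectations and iterate this one-step inequality. Since $e(0)=V_\delta(x_0,x_0)=0$ (both trajectories start at $x_0$), the recursion $\mathbb{E}[e(k+1)]\leq \rho\,\mathbb{E}[e(k)]+\bar{w}$ with zero initial value yields the closed form $\mathbb{E}[e(k)]\leq \bar{w}\sum_{j=0}^{k-1}\rho^j=\frac{1-\rho^k}{1-\rho}\bar{w}$. This is precisely the geometric-series factor appearing in the definition of $\sigma_{\mathrm{x},k}$ in~\eqref{eq:PRS_design}.

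Then I would apply Markov's inequality to the nonnegative random variable $e(k)=V_\delta(x(k),z(k))$. For the threshold $\sigma_{\mathrm{x},k}=\frac{1-\rho^k}{1-\rho}\frac{\bar w}{1-p}$, Markov gives $\prob{e(k)>\sigma_{\mathrm{x},k}}\leq \mathbb{E}[e(k)]/\sigma_{\mathrm{x},k}\leq (1-p)$, hence $\prob{V_\delta(x(k),z(k))\leq \sigma_{\mathrm{x},k}}\geq p$. Using the lower bound in~\eqref{eq:increm_bounds_1}, $\|x(k)-z(k)\|_{\underline{M}}^2\leq V_\delta(x(k),z(k))$, so the event $\{V_\delta\leq\sigma_{\mathrm{x},k}\}$ implies $x(k)-z(k)\in\mathbb{D}_k$, i.e. $x(k)\in z(k)\oplus\mathbb{D}_k=\mathcal{R}_k$. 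This establishes $\prob{x(k)\in\mathcal{R}_k}\geq p$, verifying Definition~\ref{def:PRS} and thus Assumption~\ref{asm:PRS}. Assumption~\ref{asm:PRS_nom} holds by construction with the stated $\mathbb{D}_k$, since $\mathcal{R}_k = z(k)\oplus\mathbb{D}_k$ already has the required form (in fact with equality).

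The main obstacle I anticipate is justifying the iteration of the conditional expectation cleanly: the decrease~\eqref{eq:increm_bounds_2} is stated pointwise in $(x,z,u)$, and in closed loop the input $u(k)$ and the states $x(k),z(k)$ are random (functions of $w(0:k-1)$), so I must invoke the tower property and the independence of $w(k)$ from the past (Assumption~\ref{asm:disturbance}) to take the conditional expectation over $w(k)$ with the past frozen, then average. The care needed is that~\eqref{eq:increm_bounds_2} must hold for the \emph{realized} random $u(k)\in\mathbb{U}$; since it holds uniformly over all $u\in\mathbb{U}$, conditioning on $w(0:k-1)$ (which determines $x(k),z(k),u(k)$) and applying the bound pointwise, followed by the law of iterated expectation, gives the unconditional recursion. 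Everything else is a routine geometric sum and a single Markov-inequality estimate.
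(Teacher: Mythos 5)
Your proposal is correct and follows essentially the same route as the paper's proof: define $\delta(k)=V_\delta(x(k),z(k))$, use causality of the policy together with the law of iterated expectation to obtain the recursion $\mathbb{E}[\delta(k+1)]\leq\rho\,\mathbb{E}[\delta(k)]+\bar{w}$, sum the geometric series from $\delta(0)=0$, apply Markov's inequality, and pass to $\|x(k)-z(k)\|_{\underline{M}}^2$ via the lower bound in~\eqref{eq:increm_bounds_1}. The subtlety you flag at the end—that~\eqref{eq:increm_bounds_2} holds uniformly in $u\in\mathbb{U}$ and can therefore be applied pointwise after conditioning on $w(0:k-1)$—is exactly the argument the paper makes when noting that $w(0:k-1)$ uniquely determines $x(k)$, $z(k)$, and $u(k)$.
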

\begin{proof}
Given Definition~\ref{def:PRS}, we consider an arbitrary causal policy $u(k)=\pi_k(w(0:k-1))\in\mathbb{U}$, $k\in\mathbb{I}_{\geq 0}$ with the stochastic dynamics~\eqref{eq:sys} and the nominal dynamics~\eqref{eq:nominal}. 
Define $\delta(k):=V_\delta(x(k),z(k))$, $k\in\mathbb{I}_{\geq 0}$. 
Given the fixed initial condition $x(0)=x_0$, the noise $w(0:k-1)$ uniquely invoke $x(0:k)$, $z(0:k)$, $u(0:k)$, $\delta(0:k)$ leveraging causality of the policies $\pi_k$ and the dynamics~\eqref{eq:sys}, \eqref{eq:nominal}. 
For any $k\in\mathbb{I}_{\geq 0}$, the law of iterated expectation ensures
\begin{align}
&\mathbb{E}_{w(0:k)}[\delta(k+1)]\nonumber\\
=&\mathbb{E}_{w(0:k-1)}[\mathbb{E}_{w(k)}[\delta(k+1)\mid w(0:k-1)]]\nonumber\\
=&\mathbb{E}_{w(0:k-1)}[\mathbb{E}_{w(k)}[\delta(k+1)\mid x(k),z(k))]]\nonumber\\
\stackrel{\eqref{eq:increm_bounds_2}}{\leq}& \mathbb{E}_{w(0:k-1)}[\rho \delta(k)+\bar{w}]\nonumber\\
\leq& \dots \leq \rho^{k+1} \delta(0)+\sum_{i=0}^{k}\rho^i\bar{w}=\dfrac{1-\rho^{k+1}}{1-\rho}\bar{w},
\label{eq:CCM_expected_iterated}
\end{align}
where the last equality uses the geometric series and $\delta(0)=0$ due to the fixed initialization in~\eqref{eq:sys}/\eqref{eq:nominal}.
The Markov inequality~\cite{markov1884certain} with $\delta(k)$ non-negative ensures that 
\begin{align}
\label{eq:Markov}
&\prob{\delta(k)\leq \frac{\mathbb{E}_{w(0:k-1)}[\delta(k)]}{1-p}}\geq p,\quad k\in\mathbb{I}_{\geq 0}.
\end{align} 
Combining~\eqref{eq:CCM_expected_iterated} and~\eqref{eq:Markov} yields
\begin{align*}
\prob{\delta(k)\leq \frac{1-\rho^k}{1-\rho}\frac{\bar{w}}{1-p}}\geq p.
\end{align*}
Furthermore, the lower bound~\eqref{eq:increm_bounds_1} yields
\begin{align*}
\prob{\|x(k)-z(k)\|_{\underline{M}}^2\leq \frac{1-\rho^k}{1-\rho}\frac{\bar{w}}{1-p}}\geq p.
\end{align*}
Thus, $\mathcal{R}_k(x,u(0:k-1))$ in~\eqref{eq:PRS_design} is a PRS according to Definition~\ref{def:PRS}, i.e., Assumption~\ref{asm:PRS} holds. 
Assumption~\ref{asm:PRS_nom} follows due to the structure of $\mathcal{R}_k$ in~\eqref{eq:PRS_design}. 
\end{proof}
Theorem~\ref{thm:PRS} provides a simple recursive formula to compute a probabilistic reachable set for the nonlinear stochastic system~\eqref{eq:sys} using a nominal simulation $z(k)$ and a scaled ellipsoid $\mathbb{D}_k$. 
This result is valid for inputs computed by any causal policy and thus also for the inputs generated by the SMPC scheme. 
We highlight that the shape of the PRS $\mathcal{R}_k$ computed through stochastic contraction metrics is analogous to the robust reachable sets computed through robust contraction metrics in~\cite{zhao2022tube,singh2023robust,sasfi2023robust}. 
The main difference is that $\bar{w}$ is proportional to the variance of the stochastic noise $w(k)$, while existing results utilize uniform worst-base bounds on the magnitude of the noise $w(k)$. 
In case Assumption~\ref{asm:disturbance} is generalized to time-varying co-variance bounds $\Sigma_{\mathrm{w}}(k)$, Theorems~\ref{thm:CCM}--\ref{thm:PRS} would yield $\sigma_{\mathrm{x},k}=\sum_{j=0}^{k-1}\rho^{k-j}\frac{\bar{w}(j)}{1-p}$, i.e., a scaling accounting for the time-varying noise characteristics.

\section{Overall algorithm}
\label{sec:algorithm}
The theoretical analysis in Section~\ref{sec:SNMPC_tractable} relies on PRS (Asm.~\ref{asm:PRS}, \ref{asm:PRS_nom}), tightened constraints $\bar{\mathbb{X}}_k$, and a terminal cost $V_{\mathrm{f}}$ and terminal set $\mathbb{X}_{\mathrm{f}}$ (Asm.~\ref{asm:terminal}). 
Section~\ref{sec:CCM} showed how we can constructed PRS satisfying Assumptions~~\ref{asm:PRS}--\ref{asm:PRS_nom} using contraction metrics and Appendix~\ref{app_CCM} provides additional details regarding their optimization.
In the following, we discuss how the other two components are computed before summarizing the overall design.

\subsection{Constraint tightening}
\label{sec:algorithm_tightening}
The following proposition provides a simple formula to compute the tightened constraint set $\bar{\mathbb{X}}_k$. 
\begin{proposition}
\label{prop:tightening}
Suppose the conditions from Theorem~\ref{thm:PRS} and Assumption~\ref{asm:setup_regularity}\ref{asm:setup_regularity_constraint} hold. 
Then, for all $k\in\mathbb{I}_{\geq 0}$, the tightened constraints 
\begin{align}
&\bar{\mathbb{X}}_k=\left\{x|~h_j(x)+c_j\sqrt{\sigma_{\mathrm{x},k}}\leq 0,~j\in\mathbb{I}_{[1,r]}\right\}, \nonumber\\
\label{eq:tightened_constraints_prop}
&c_j=\max_{x\in\mathbb{R}^n}\left\|\underline{M}^{-1/2}\left.\dfrac{\partial h_j}{\partial x}\right|_x^\top\right\|,~j\in\mathbb{I}_{[1,r]}
\end{align}
satisfy $\bar{\mathbb{X}}_k\oplus\mathbb{D}_k\subseteq\mathbb{X}$. 
\end{proposition}
\begin{proof}
Note that $\bar{\mathbb{X}}_k\oplus\mathbb{D}_k\subseteq\mathbb{X}$ is equivalent to $x\in\mathbb{X}$ for any $z\in\bar{\mathbb{X}}_k$ and any $\|x-z\|_{\underline{M}}\leq \sqrt{\sigma_{\mathrm{x},k}}$ using~\eqref{eq:PRS_design}. 
Denote $\gamma(s)=z+s(x-z)$, $s\in[0,1]$ as the line connecting some $x$ an $z$. 
Similar to~\cite[Prop.~5]{sasfi2023robust}, the mean value theorem yields
\begin{align*}
h_j(x)\leq& h_j(z)+\int_0^1 \left. \dfrac{\partial h}{\partial x}\right|_{\gamma(s)}\left.\dfrac{\partial \gamma}{\partial s}\right|_s\mathrm{d}s\\
=& h_j(z)+\int_0^1\left. \dfrac{\partial h}{\partial x}\right|_{\gamma(s)}\underline{M}^{-1/2}\underline{M}^{1/2}(x-z) \mathrm{d}s\\
\leq &h_j(z)+\int_0^1\left. \left\|\underline{M}^{-1/2}\dfrac{\partial h}{\partial x} \right|^\top_{\gamma(s)}\right\| \left\|\underline{M}^{1/2}(x-z)\right\| \mathrm{d}s\\
\stackrel{\eqref{eq:tightened_constraints_prop}}{\leq}& h_j(z)+c_j\int_0^1\|x-z\|_{\underline{M}} d\mathrm{s}\\
= &h_j(z)+c_j\|x-z\|_{\underline{M}}\leq h_j(z)+c_j\sqrt{\sigma_{\mathrm{x},k}}\stackrel{\eqref{eq:tightened_constraints_prop}}{\leq} 0, 
\end{align*}
for all $j\in\mathbb{I}_{[1,r]}$ and thus $x\in\mathbb{X}$ (Asm.~\ref{asm:setup_regularity}\ref{asm:setup_regularity_constraint}). 
\end{proof}
In the special case of polytopic constraints, Proposition~\ref{prop:tightening} reduces to computing $\bar{\mathbb{X}}_k$ using the Pontryagin difference. 

\subsection{Design of terminal cost and set}
\label{sec_term}
The following proposition shows how to design a terminal set $\mathbb{X}_{\mathrm{f}}$ and a terminal cost $V_{\mathrm{f}}$ satisfying Assumption~\ref{asm:terminal}. 
\begin{proposition}
\label{prop:term}
Suppose that Assumptions~\ref{asm:setup_regularity}\ref{asm:setup_regularity_cost}, \ref{asm:setup_regularity}\ref{asm:setup_regularity_origin}, and \ref{asm:setup_regularity}\ref{asm:setup_regularity_CCM} hold, the conditions in Corollary~\ref{cor:CCM_stoch} are satisfied, 
 and $\mathbb{D}_k\subseteq \mathbb{X}$, $k\in\mathbb{I}_{\geq 0}$.
Then, there exist constants $\alpha_{\mathrm{f}}\geq 0,c_{\mathrm{f}}> 0$, such that Assumption~\ref{asm:terminal} holds with $V_{\mathrm{f}}(x)=\|x\|_P^2$, $P=c_{\mathrm{f}}M$, $\mathbb{X}_{\mathrm{f}}=\{x|~\|x\|_M^2\leq \alpha_{\mathrm{f}}\}$, and $u_{\mathrm{f}}=0$. 
\end{proposition}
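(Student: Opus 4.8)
The plan is to verify the three parts of Assumption~\ref{asm:terminal} by constructing a local contraction argument around the origin, using $u_{\mathrm{f}}=0$ and exploiting the equilibrium $f(0,0,0)=0$ (Asm.~\ref{asm:regularity}). First I would establish a local decrease for the terminal cost under the nominal dynamics with zero input. Writing $x^+=f(x,0,0)$ and using $f(0,0,0)=0$, the contraction condition~\eqref{eq:CCM_rho_simplified} from Corollary~\ref{cor:CCM_stoch} applied at $z=0$ (so that $z^+=0$) together with~\eqref{eq:increm_bounds} gives $\|x^+\|_M^2=V_\delta(x^+,0)\leq\rho V_\delta(x,0)=\rho\|x\|_M^2$. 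This is the key structural estimate: the nominal closed-loop map contracts the weighted norm by the factor $\rho<1$.

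Next I would use this contraction to verify the three conditions in order. For positive invariance~\ref{term_invar}: if $\|x\|_M^2\leq\alpha_{\mathrm{f}}$ then $\|x^+\|_M^2\leq\rho\alpha_{\mathrm{f}}\leq\alpha_{\mathrm{f}}$, so $\mathbb{X}_{\mathrm{f}}$ is positively invariant under $u_{\mathrm{f}}=0$ for \emph{any} $\alpha_{\mathrm{f}}\geq 0$. For the Lyapunov decrease~\ref{term_CLF}: with $V_{\mathrm{f}}(x)=c_{\mathrm{f}}\|x\|_M^2$ I need $c_{\mathrm{f}}\|x^+\|_M^2\leq c_{\mathrm{f}}\|x\|_M^2-\ell(x,0)$, i.e. $c_{\mathrm{f}}(1-\rho)\|x\|_M^2\geq\|x\|_Q^2$. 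Since $\ell(x,0)=\|x\|_Q^2$ and $Q,M\succ 0$, this holds globally for all $x\in\mathbb{R}^n$ by choosing $c_{\mathrm{f}}$ large enough that $c_{\mathrm{f}}(1-\rho)M\succeq Q$; the bound $\|x^+\|_M^2\leq\rho\|x\|_M^2$ is exactly what converts the contraction rate into the required decrease margin. This is why the decrease must hold globally (as demanded in~\ref{term_CLF}), which the global validity of~\eqref{eq:CCM_rho_simplified} permits. For constraint satisfaction~\ref{term_con}, I would choose $\alpha_{\mathrm{f}}$ small enough that $\mathbb{X}_{\mathrm{f}}\subseteq\bar{\mathbb{X}}_k=\mathbb{X}\ominus\mathbb{D}_k$ for all $k$; here the hypothesis $\mathbb{D}_k\subseteq\mathbb{X}$ (and the monotonicity $\sigma_{\mathrm{x},k}\leq\bar{w}/((1-\rho)(1-p))$, so the $\mathbb{D}_k$ are uniformly bounded) ensures $\bar{\mathbb{X}}_k$ contains a uniform neighborhood of the origin, and $0\in\mathbb{X}$ follows since the origin must be feasible.

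The main obstacle I anticipate is condition~\ref{term_con}: I must find a \emph{single} $\alpha_{\mathrm{f}}$ that works simultaneously for all $k\in\mathbb{I}_{\geq 0}$, so I cannot let the terminal set shrink with $k$. The resolution is that $\sigma_{\mathrm{x},k}=\frac{1-\rho^k}{1-\rho}\frac{\bar{w}}{1-p}$ is monotonically increasing and bounded above by $\sigma_{\mathrm{x},\infty}=\frac{\bar{w}}{(1-\rho)(1-p)}$, so the worst-case tightened set is $\bar{\mathbb{X}}_\infty=\mathbb{X}\ominus\mathbb{D}_\infty$, and it suffices to fit $\mathbb{X}_{\mathrm{f}}$ inside this smallest set. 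The assumption $\mathbb{D}_k\subseteq\mathbb{X}$ guarantees this tightened set still contains the origin in its interior (relative to the relevant constraints), so a strictly positive $\alpha_{\mathrm{f}}$ exists; if some constraint is active at the origin one falls back to $\alpha_{\mathrm{f}}=0$, which is why the statement permits $\alpha_{\mathrm{f}}\geq 0$ rather than $\alpha_{\mathrm{f}}>0$. A minor technical point to handle is that $\mathbb{X}_{\mathrm{f}}$ and $V_{\mathrm{f}}$ share the same level-set geometry (both defined through $\|\cdot\|_M$), so the invariance and decrease arguments are mutually consistent, and I would fix $c_{\mathrm{f}}$ first from~\ref{term_CLF} and then $\alpha_{\mathrm{f}}$ from~\ref{term_con} independently.
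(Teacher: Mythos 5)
Your proposal is correct and follows essentially the same route as the paper's proof: the deterministic contraction estimate $\|f(x,0,0)\|_M^2\leq\rho\|x\|_M^2$ (obtained from~\eqref{eq:CCM_rho_simplified} in the noise-free case), positive invariance of the sublevel set, the choice $c_{\mathrm{f}}(1-\rho)M\succeq Q$ (equivalently $c_{\mathrm{f}}=\lambda_{\max}(Q,M)/(1-\rho)$) for the global Lyapunov decrease, and the choice of $\alpha_{\mathrm{f}}\geq 0$ using $\mathbb{D}_k\subseteq\mathbb{X}\Leftrightarrow 0\in\bar{\mathbb{X}}_k$ for the constraint condition. Your additional observation on the monotone boundedness of $\sigma_{\mathrm{x},k}$ is a harmless refinement of the same argument.
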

\begin{proof}
Analogous to the proof of~\eqref{eq:increm_bounds_2} in Theorem~\ref{thm:CCM}, for any $x\in\mathbb{R}^n$: 
\begin{align}
\label{eq:CLF_CCM}
&\|f(x,0,0)\|_M^2=\|f(x,0,0)-f(0,0,0)\|_M^2\nonumber\\
\leq& \rho\|x-0\|_M^2=\rho\|x\|_M^2,
\end{align}
where we used $f(0,0,0)=0$ (Asm.~\ref{asm:setup_regularity}).
The positive invariance condition (Asm.~\ref{asm:terminal}~\ref{term_invar}) holds since $\mathbb{X}_{\mathrm{f}}$ is a sublevel set of the Lyapunov function $\|x\|_M^2$. \\
Constraint satisfaction (Asm.~\ref{asm:terminal}~\ref{term_con}) holds by choosing $\alpha_{\mathrm{f}}\geq 0$ as the largest constant satisfying $\mathbb{X}_{\mathrm{f}}\subseteq \bar{\mathbb{X}}_k$, $k\in\mathbb{I}_{\geq 0}$, 
with $\alpha_{\mathrm{f}}\geq 0$ since $0\in\bar{\mathbb{X}}_k=\mathbb{X}\ominus\mathbb{D}_k$, $k\in\mathbb{I}_{\geq 0}$, by assumption. \\
Note that $ \|x\|_Q^2\leq \lambda_{\max}(Q,M)\|x\|_M^2$ where $\lambda_{\max}(Q,M)>0$ is the maximal generalized eigenvalue of $Q$ w.r.t. $M$. 
The terminal cost condition (Asm.~\ref{asm:terminal}~\ref{term_CLF}) holds with 
\begin{align*}
&V_{\mathrm{f}}(f(x,0,0))=c_{\mathrm{f}}\|f(x,0,0)\|_M^2\stackrel{\eqref{eq:CLF_CCM}}{\leq} c_{\mathrm{f}}\rho\|x\|_M^2\\
\leq &c_{\mathrm{f}}\|x\|_M^2-\dfrac{(1-\rho)c_{\mathrm{f}}}{\lambda_{\max}(Q,M)}\|x\|_Q^2
= V_{\mathrm{f}}(x)-\ell(x,0),
\end{align*}
where the last equality uses $c_{\mathrm{f}}:=\frac{\lambda_{\max}(Q,M)}{1-\rho}> 0$.
\end{proof}
\subsection{Summary of theoretical properties}
The overall offline design and online implementation are summarized in Algorithm~\ref{alg}.
\begin{algorithm}
\caption{SMPC: Offline design and online implementation}
\label{alg}
\begin{algorithmic}[t]
\State Given model $f$, constraints $\mathbb{X},\mathbb{U}$, variance bound $\Sigma_{\mathrm{w}}$, cost matrices $Q,R\succ 0$, probability level $p\in(0,1)$. 
\State \textbf{Offline design}
\State Compute contraction metric $M,\rho,\bar{w}$ (App.~\ref{app_CCM}). 
\State Compute tightened constraints $\overline{\mathbb{X}}_k$ (Prop.~\ref{prop:tightening}). 
\State Design terminal cost $V_{\mathrm{f}}$ and terminal set $\mathbb{X}_{\mathrm{f}}$ (Prop.~\ref{prop:term}).
\State \textbf{Online operation}
\State Initialize $z(0)=x(0)=x_0$. 
\For{$k\in\mathbb{I}_{\geq 0}$}
	\State Measure the state $x(k)$ from system~\eqref{eq:sys}.
	\State Solve Problem~\eqref{eq:SNMPC} using $x(k)$, $z(k)$.
	\State Apply the control input $u(k)=\mathbf{u}^\star_{k|k}$.
	\State Simulate nominal state $z(k+1)$ using~\eqref{eq:nominal}. 
\EndFor
\end{algorithmic}
\end{algorithm} \\
The offline computation involves the construction of the contraction metric, tightening of the constraints, and the computation of the terminal cost/set. 
The following corollary demonstrates that this procedure satisfies the posed conditions on the PRS (Asm.~\ref{asm:PRS}--\ref{asm:PRS_nom}) and the terminal ingredients (Asm.~\ref{asm:terminal}). 
\begin{corollary}
\label{cor:design}
Suppose Assumptions~\ref{asm:disturbance} and \ref{asm:setup_regularity} hold. 
Consider any feasible solution $(W,X,\rho)$ to Problem~\eqref{eq:CCM_LMI} in Appendix~\ref{app_CCM} with $\tr{X}\leq (1-\rho)(1-p)$. 
Then, $\underline{M}=W^{-1}$, $\bar{w}=\tr{X}$, $u_{\mathrm{f}}=0$, $V_{\mathrm{f}}=c_{\mathrm{f}}\|x\|_M^2$, $\mathbb{X}_{\mathrm{f}}=\{x|~\|x\|_M^2\leq \alpha_{\mathrm{f}}\}$ with $c_{\mathrm{f}}>0$, $\alpha_{\mathrm{f}}\geq 0$ from Proposition~\ref{prop:term} and $\mathcal{R}_k$, $\mathbb{D}_k$ from Theorem~\ref{thm:PRS} satisfy Assumptions~\ref{asm:PRS}, \ref{asm:PRS_nom}, and \ref{asm:terminal}. 
\end{corollary}
\begin{proof}
Proposition~\ref{prop:CCM_LMI} with $(W,X,\rho)$ from Problem~\eqref{eq:CCM_LMI} ensures that the conditions in Corollary~\ref{cor:CCM_stoch} hold and thus Assumptions~\ref{asm:PRS} and \ref{asm:PRS_nom} hold. 
Furthermore, the posed bound on $\tr{X}$ and $h_j(0)=-1$ ensures that $0\in\bar{\mathbb{X}}_k$, $k\in\mathbb{I}_{\geq 0}$. 
Thus, the conditions in Proposition~\ref{prop:term} hold and the resulting terminal cost and set satisfy Assumption~\ref{asm:terminal}. 
\end{proof}
This design ensures that the conditions in Theorem~\ref{thm:SMPC} are satisfied. 
Hence, Algorithm~\ref{alg} provides the desired closed-loop guarantees:~\ref{enum_desired_feas},~\ref{enum_desired_chance},~\ref{enum_desired_cost}. 
The online operation  to compute the control input only requires solving a finite-dimensional deterministic optimization problem~\eqref{eq:SNMPC}.

\section{Discussion}
\label{sec:discussion}
In this section, we contrast the proposed SMPC formulation (Sec.~\ref{sec:SNMPC_shrinking}--\ref{sec:SNMPC_tractable}) and the derived PRS (Sec.~\ref{sec:CCM}) to existing methods in terms of conservatism and computational complexity. 
\subsection{SMPC formulations}
\label{sec:discussion_related}
The proposed SMPC scheme (Sec.~\ref{sec:SNMPC_tractable}) addresses nonlinear stochastic systems with unbounded noise and guarantees: \ref{enum_desired_feas}: recursive feasibility, \ref{enum_desired_chance}: closed-loop chance constraint satisfaction, and \ref{enum_desired_cost}: closed-loop performance. 
The SMPC is characterized by a finite-dimensional deterministic optimization problem~\eqref{eq:SNMPC}, which can be equivalently written as a nominal MPC with a state $(x,z)\in\mathbb{R}^{2n}$ and control input $u\in\mathbb{R}^m$. 
Hence, the computational complexity is only marginally increased compared to a nominal MPC; see also the numerical comparison in Section~\ref{sec:num_compare}.
In the following, we contrast the proposed design and the closed-loop properties to existing approaches. 
\subsubsection*{Shrinking-horizon case}
Section~\ref{sec:SNMPC_shrinking} first addresses the finite-horizon problem with a shrinking horizon approach.
For this problem, an alternative solution has been proposed in~\cite{wang2021recursive}, which also ensures \ref{enum_desired_feas} and \ref{enum_desired_chance}. 
However, theoretical guarantees in~\cite{wang2021recursive} require exact evaluation of different predictive distributions for re-conditioning and a tractable implementation is only provided for the special case of linear systems using scenario theory~\cite{campi2019scenario}. 
In contrast, the formulation in Section~\ref{sec:SNMPC_shrinking} can leverage general PRS (Def.~\ref{def:PRS}) and does not require exact evaluation of probabilities online. 
\subsubsection*{Relation to existing linear SMPC}
The online operation (Alg.~\ref{alg}) and SMPC formulation (Problem~\eqref{eq:SNMPC}) are inspired by the \textit{indirect-feedback} SMPC paradigm~\cite{Hewing2020indirect,hewing2019scenario,muntwiler2022lqg,mark2021data,arcari2023stochastic}: 
The constraints are imposed on the nominal state $z(k)$, while the measured state $x(k)$ is only used in the cost function. 
This decoupling ensures recursive feasibility, even for arbitrary large noise realizations. 
In~\cite{Hewing2020indirect,hewing2019scenario,muntwiler2022lqg,mark2021data,arcari2023stochastic}, linearity of the dynamics is leveraged to define error dynamics which evolve completely independent of the SMPC, thus facilitating the construction of PRS for the error. 
On the other hand, the presented analysis leverages parametrized PRS that are valid for any causal policy generated by the SMPC. 
This relaxation of the independence assumption expands the applicability of this paradigm to general nonlinear stochastic optimal control problems, while still containing the approach~\cite{Hewing2020indirect} as a special case. 

\subsubsection*{Recursively feasible nonlinear SMPC formulations}
The works in~\cite{lorenzen2019stochastic,schluter2020constraint,mcallister2022nonlinear} provide nonlinear SMPC schemes that ensure recursive feasibility and closed-loop chance constraint satisfaction. 
These approaches assume a uniform bound on the worst-case noise realizations to robustly ensure recursive feasibility using specific PRS constructions. 
Hence, they cannot leverage general PRS (Def.~\ref{def:PRS}) or address general noise distributions (Asm.~\ref{asm:disturbance}). 
As a result, these existing approaches are conservative in case of rare large noise, see also the qualitative comparisons in~\cite{koehler2023stochastic} and the numerical comparison in Section~\ref{sec:num_compare}. 
The proposed approach avoids these issues by using general PRS and a nominal state $z(k)$ to formulate the optimization problem.

\subsubsection*{Expected cost bounds}
Theorem~\ref{thm:SMPC} shows that the asymptotic expected cost is bounded by a function of the variance. 
Similar bounds on the expected cost are well established in the SMPC literature~\cite{chatterjee2014stability,Hewing2020indirect,lorenzen2019stochastic,mcallister2022nonlinear}. 
However, there are key differences in the underlying analysis, designs, and assumptions. 
The proposed analysis utilizes the specific structure of the quadratic cost and the (global) properties of the terminal cost to derive this expected cost bound. 
Compared to~\cite{chatterjee2014stability,lorenzen2019stochastic,mcallister2022nonlinear}, our analysis is complicated by the fact that the optimal cost $\mathcal{J}_N^\star$ is not only a function of the state $x(k)$ and that $x(k)$ may be unbounded. 
In particular, the analysis in~\cite{mcallister2022nonlinear} relies on uniform bounds on the state $x(k)$. 
While~\cite{chatterjee2014stability} also uses a global terminal cost to derive a global bound on the optimal cost, this step requires extra care in the proposed analysis due to the difference in the SMPC formulation, see derivation of~\eqref{eq:cost_upper_bound}. 
While~\cite{chatterjee2014stability,lorenzen2019stochastic,mcallister2022nonlinear,Hewing2020indirect} minimize the expected cost, the proposed formulation and analysis (Sec.~\ref{sec:SNMPC_tractable}) considers a certainty-equivalent cost, which avoids computationally intensive sampling-based approximations. 
As shown in~\cite{messerer2024fourth}, the benefits of minimizing the expected cost compared to a certainty-equivalent cost are often negligible, especially for smooth quadratic costs. 

\subsection{Computation of probabilistic reachable sets}
\label{sec:discussion_PRS}
In the following, we contrast the proposed PRS computation to sampling-based approximation and also discuss extensions to closed-loop predictions. 
\subsubsection*{Offline constructed PRS}
Section~\ref{sec:CCM} provides a design of PRS using contraction metrics. This construction computes  the contraction metric $M$ offline using LMIs while the online evaluation of the PRS only requires simulating a nominal state $z(k)\in\mathbb{R}^n$. 
This is comparable to existing \textit{robust} reachability techniques using contraction metrics~\cite{zhao2022tube,singh2023robust,sasfi2023robust}. 
For large to medium scale problems, a direct application of these LMI designs may face scalability issues and instead structured parametrizations can be used to maintain scalability~\cite{shiromoto2018distributed}. 
\subsubsection*{Sampling-based approximations}
A common alternative to compute PRS are sampling-based approximations, i.e., the nonlinear system is simulated for $N_{\mathrm{s}}\in\mathbb{I}_{\geq 1}$ different random realizations of the noise $w$ and the union of these trajectories approximates the PRS.
Key benefits of such techniques are ease of implementation, absence of regularity conditions (Asm.~\ref{asm:setup_regularity}), and asymptotic exactness\footnote{%
Exactness also requires exclusion of the worst-case $(1-p)\cdot N_{\mathrm{s}}$ samples, which can be formulated as a (large-scale) mixed-integer problem~\cite{blackmore2010probabilistic}.} for $N_{\mathrm{s}}\rightarrow\infty$. 
However, sampling-based approaches also have a number of disadvantages: 
\begin{itemize}
\item (i) computational complexity increases with the number of samples $N_{\mathrm{s}}$;
\item (ii) theoretical guarantees require choosing $N_{\mathrm{s}}$ large enough~\cite{garatti2024non};
\item (iii) computing PRS $\mathcal{R}_k$ for $k\rightarrow\infty$ is intractable, but is required for recursive feasibility;
\item (iv) implementation requires the noise distribution to be time-invariant and known.
\end{itemize}
The numerical comparison in Section~\ref{sec:num} also highlights some of these drawbacks.
In the special case of \textit{linear} stochastic systems, such sampling-based techniques are well established using scenario theory~\cite{wang2021recursive,blackmore2010probabilistic,hewing2019scenario,campi2019scenario,Lorenzen2017tightening,aolaritei2023wasserstein} and the  difficulties can sometimes be mitigated.
\subsubsection*{Closed-loop predictions}
In the following, we discuss the generalization to closed-loop predictions with relaxed probabilistic input constraints $\prob{u(k)\in\mathbb{U}}\geq p$, $k\in\mathbb{I}_{\geq 0}$ (cf. Remark~\ref{rk:input}). 
Suppose we have designed a feedback $u=\kappa(x,z,v)$, such that for any $x,z\in\mathbb{R}^n$, $v\in\mathbb{U}$: 
\begin{align}
&\mathbb{E}_{w(k)}[V_\delta(f(x,\kappa(x,z,v),w),f(z,v,0))]\leq \rho V_\delta(x,z)+\bar{w},\nonumber\\
\label{eq:CCM_kappa_2}
&\|\kappa(x,z,v)-v\|^2\leq L_\kappa\|x-z\|_{\underline{M}}^2,~L_\kappa\geq 0.
\end{align}
Such a feedback can be jointly synthesized with the contraction metric $M$ by adapting Theorem~\ref{thm:CCM} and Proposition~\ref{prop:CCM_LMI} from Appendix~\ref{app_CCM}, see~\cite{zhao2022tube,singh2023robust,sasfi2023robust}.
Similar to Theorem~\ref{thm:PRS}, for any causal policy $v(k)=\pi_k(w(0:k-1))$,
the closed-loop system
\begin{align*}
z(k+1)=&f(z(k),v(k),0),~z(0)=x(0)=x_0,\\
x(k+1)=&f(x(k),u(k),w(k)),~u(k)=\kappa(x(k),z(k),v(k))
\end{align*}
satisfies
\begin{align*}
\prob{|x(k)-z(k)\|_{\underline{M}}^2\leq \sigma_{\mathrm{x},k},~\|u(k)-v(k)\|^2\leq L_\kappa\sigma_{\mathrm{x},k}}\geq p
\end{align*}
with $\sigma_{\mathrm{x},k}$ from~\eqref{eq:PRS_design}. 
By including the feedback $\kappa$ in the prediction, we can deal with unstable systems and reduce the conservatism of the PRS.

\section{Numerical example}
\label{sec:num}
The following numerical example demonstrates the applicability of the proposed framework to nonlinear stochastic systems and highlights the computational tractability and theoretical guarantees compared to existing SMPC methods. 
The code is available online.\footnote{https://gitlab.ethz.ch/ics/SMPC-CCM}
All statistical quantities (expected cost, probability of constraint violation) are approximated by using $10^5$ independently sampled noise sequences.
\begin{figure}[t]
\centering
\includegraphics[width=0.45\textwidth]{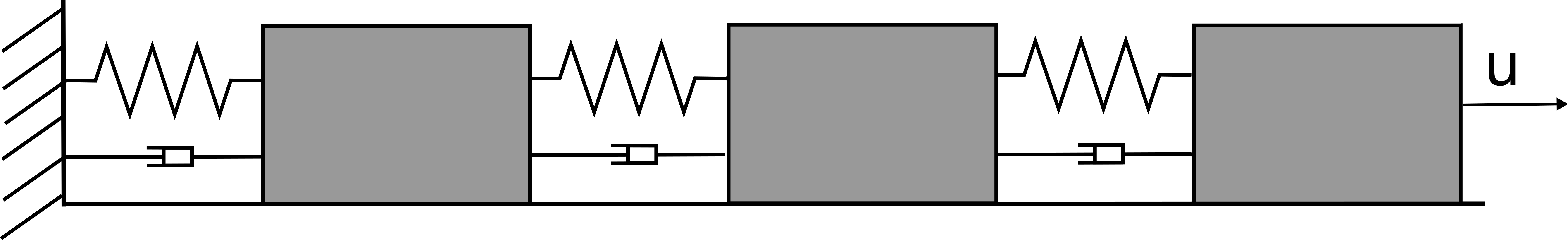}
\includegraphics[width=0.3\textwidth]{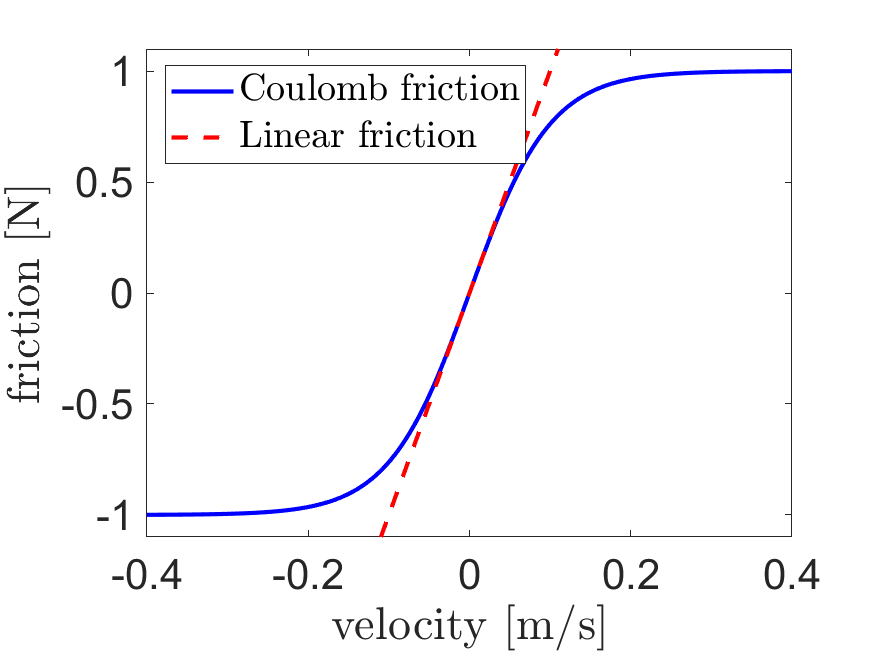}
\caption{Top: Considered system consisting of a chain of three mass-spring-dampers with random wind forces $w\in\mathbb{R}^3$ and actuation $u\in\mathbb{R}$. 
Bottom: Characteristic curve of Coulomb friction compared to linear friction.}
\label{fig:sketch_mass_spring}
\end{figure}
\subsection{Setup}
We consider a chain of three mass-spring-dampers with nonlinear Coulomb friction, shown in Figure~\ref{fig:sketch_mass_spring}. 
The model equations are given by
\begin{align*}
m\dot{v}_1=&k(p_2-2p_1)+d(v_2-2v_1)-f_{\mathrm{nl}}(v_1)+w_1\\
m\dot{v}_2=&k(p_1+p_3-2p_2)+d(v_1+v_3-2v_2)-f_{\mathrm{nl}}(v_2)+w_2\\
m\dot{v}_3=&k(p_2-p_3)+d(v_2-v_3)-f_{\mathrm{nl}}(v_3)+w_3+u
\end{align*}
and $\dot{p}=v$ with positions $p$, velocities $v$, state $x=[p;v]\in\mathbb{R}^6$, control input $u\in \mathbb{R}^1$, process noise $w\in\mathbb{R}^3$, mass $m=5$, spring constant $k=2$, damping constant $d=1$, and the Coulomb friction $f_{\mathrm{nl}}(v)=\tanh(10\cdot v)$. 
The linearized system is under-damped with the slowest mode having a time constant of $50$~[s] and fast oscillations at a frequency of about 1~[Hz]. 
The noise $w$ has a covariance of $\Sigma_{\mathrm{w}}=10^{-3}\cdot I_3$ with a later specified distribution. 
The model is discretized using Euler with a sampling time of $dt=0.25$~[s]. 
The actuator force is bound by $\mathbb{U}=[-10^2,10^2]$~[N]. 
We consider chance constraints~\eqref{eq:prob_state} with $p=95\%$ and the polytope $\mathbb{X}=\{[p;v] \mid\min\{p_1,p_2-p_1,p_3-p_2\}\geq 1, \|v\|_\infty\leq 2\}$.

\subsubsection*{Offline design}
The offline design is performed according to Corollary~\ref{cor:design} with a contraction rate of $\rho\approx\exp(-dt/50).$ 
Executing the overall offline design (Alg.~\ref{alg}) required about two seconds on a standard laptop, where the LMIs were formulated with YALMIP~\cite{lofberg2004yalmip} and solved using SeDuMi1.3 and Matlab.

\subsection{Probabilistic reachable set}
\begin{figure}[t]
\centering 
 \includegraphics[width=0.4\textwidth]{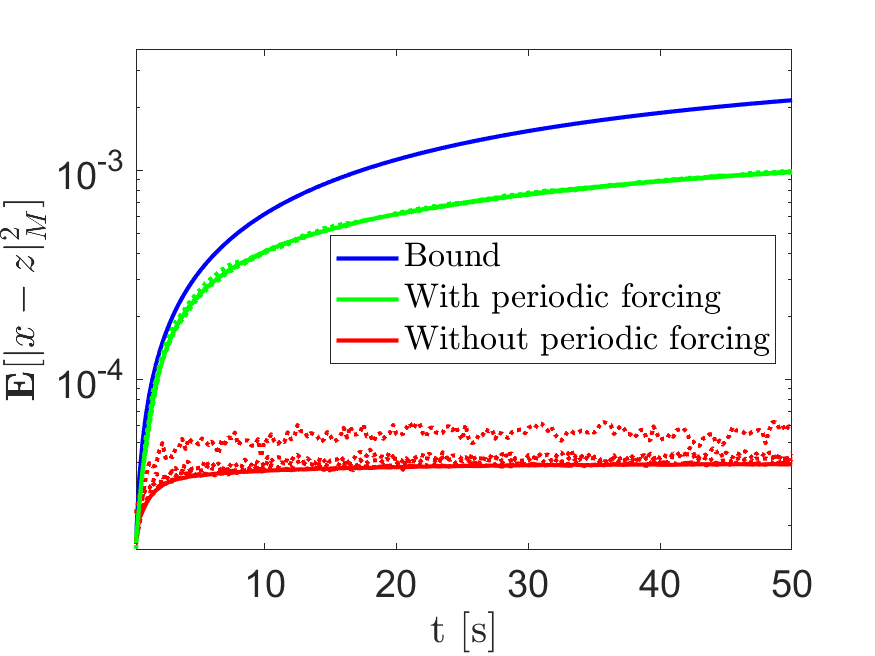}
 \includegraphics[width=0.4\textwidth]{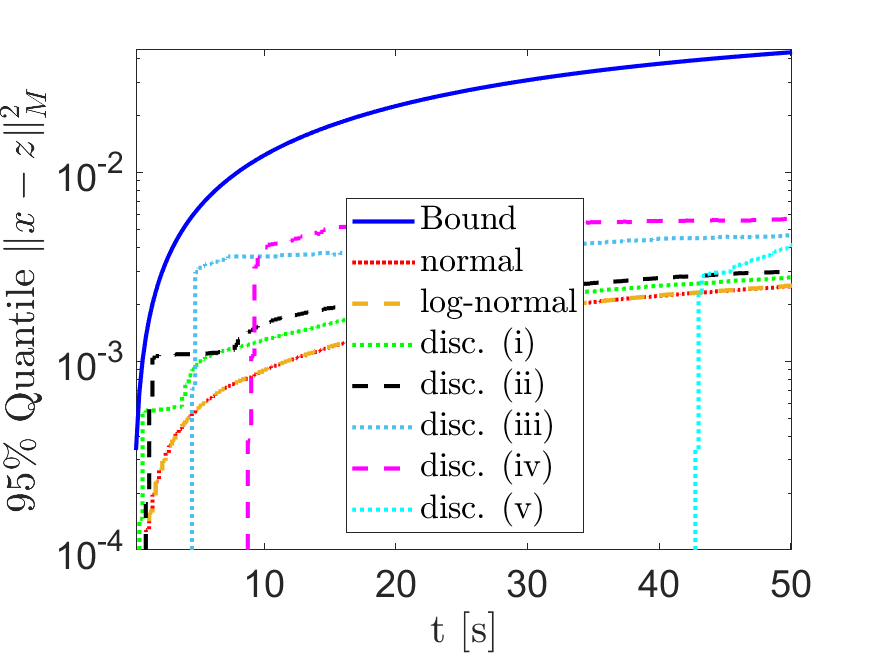}
 \caption{Open-loop simulations: 
Top: Expected (weighted) squared difference between nominal prediction $z$ and perturbed real state $x$: derived bound (blue, solid); empirical mean with periodic forcing (green) and without any forcing (red) for normal distributions (solid, thick) and different discrete distributions (dotted). 
Bottom: Derived bound (blue, solid) on the $p=95\%$ quantile of $\|x-z\|_M^2$, compared to empirical quantiles for the different distributions (dotted and dashed) with periodic forcing}.
 \label{fig:expected_PRS}
\end{figure} 
First, we consider the proposed PRS design using stochastic contraction metrics (Sec.~\ref{sec:CCM}) and study the conservatism depending on the noise distribution and operating conditions. 
We consider a zero initial condition. 
Two different input signals $u$ are applied to the system: 
\begin{enumerate}[label=\alph*)]
\item periodic forcing with amplitude $1$~[kN] and period $12.5$~[s], operating mostly in the saturated region of the Coulomb friction (cf. Fig.~\ref{fig:sketch_mass_spring});
\item zero input, operating close to the origin with almost linear dynamics.
\end{enumerate}
We examine seven different noise distributions $\mathcal{Q}_{\mathrm{w}}$ with zero mean and covariance $\Sigma_{\mathrm{w}}$:
\begin{itemize}
\item Normal distribution $w\sim\mathcal{N}(0,\Sigma_{\mathrm{w}})$;
\item Log-normal distribution shifted to zero mean;
\item discrete-distributions\footnote{%
For each coordinate $i$, $w_i\in\{-c,0,c\}$ with $c$ such that $\mathbb{E}_w[ww^\top]=\Sigma_{\mathrm{w}}$. 
These distributions are the (point-wise) worst-case when applying the Markov inequality to bound $|w|^2$ using only $\mathbb{E}[w^2]$.} with \\
$\prob{w(k)=0}\in\{99\%,99.5\%,99.9\%,99.95\%,99.99\%\}$.
\end{itemize}
Figure~\ref{fig:expected_PRS} shows the bound on expected squared error derived in Theorem~\ref{thm:CCM} and the empirical results for different input signals. We see a large disparity in the error (factor $50$) for the two different input signals, which highlights the significant effect of the nonlinearities.
This also implies that na\"ive tuning of constraint tightening based on offline sampling (cf., e.g., \cite[Sec.~5.5]{mayne2016robust}) may not generalize well for nonlinear dynamics. 
In contrast, the derived bound is valid independent of the specific input $u$ applied to the system or the exact noise distribution $\mathcal{Q}_{\mathrm{w}}$. 
Figure~\ref{fig:expected_PRS} studies the conservatism of the probabilistic reachable set derived in Theorem~\ref{thm:PRS}, where we only focus on the more critical case of periodic forcing. 
Here, the exact distribution $w(k)\sim\mathcal{Q}_{\mathrm{w}}$ has also a significant effect. 
This implies that any method that directly utilizes the distribution $\mathcal{Q}_{\mathrm{w}}$, such as sampling-based approaches, can only ensure safe operation if the underlying distribution $\mathcal{Q}_{\mathrm{w}}$ is exactly known. 
In contrast, the derived theoretical results~(Thm.~\ref{thm:PRS}) provide an upper bound on the empirical quantile for all distributions, highlighting the distributional robustness but also the potential conservatism of the derived result.

\begin{table*}
\begin{tabular}{r|l|l|l|l|l|l|l|l|l|l|l|l|l}
Method&Nominal&Proposed&Bounded&\multicolumn{7}{c}{Sampling $N_{\mathrm{s}}\in\{10,20,50,100,200,500,1000\}$}\\ \hline
Computation time [ms]&27&34&30&139&307&990&$2.2\cdot 10^3$&$4.9\cdot 10^3$&$1.5\cdot 10^4$&$3.1\cdot 10^4$\\
$\max_k\prob{x_k\notin\mathbb{X}}[\%]$& 14.1&0.0&0.0&14.1&11.5&6.0&2.0&0.8&0.6&0.2\\
$\mathbb{E}_{w(\cdot)}\left[\mathcal{J}_N(x(\cdot),u(\cdot))\right]$&1639&1670&1822&1639&1639&1644&1647&1648&1652&1650
\end{tabular}
\vspace{1mm}
\caption{Comparison of nominal MPC, the proposed PRS computation, PRS computations leveraging bounded support, and approximations with $N_{\mathrm{s}}$ samples for a finite-horizon stochastic optimal control problem, showing computational time, maximal probability of constraint violation, and average cost.}
\label{tab:comp}
\end{table*}

\subsection{Comparison}
\label{sec:num_compare}
{We consider an open-loop finite-horizon control problem. 
To ensure applicability of existing approaches, we consider the discrete distribution (ii) from Figure~\ref{fig:expected_PRS}, which has bounded support, and we assume that this distribution is precisely known. 
We compare the following approaches:
\begin{itemize}
\item \textit{Nominal:} A nominal MPC formulation that neglects the presence of noise;
\item \textit{Proposed:} The proposed approach, with PRS computed using contraction metrics (Thm.~\ref{thm:PRS});
\item \textit{Bounded:} SMPC formulations~\cite{schluter2020constraint,bonzanini2019tube,mcallister2022nonlinear} using bounded support of the noise to compute PRS\footnote{%
These methods robustly enforce $\prob{x_{k+1|k}\in\mathbb{X}|x_k}\geq p$ by combing a robust reachable set with a disturbance set containing $p\%$ of the noise. We implemented the methods in~\cite{schluter2020constraint,bonzanini2019tube} by computing a robust reachable set using the contraction metric $M$.}; 
\item \textit{Sampling:} Sampling-based approximation~\cite{campi2019scenario}, using $N_{\mathrm{s}}\in\{10,20,50,100,200,500,1000\}$ samples\footnote{%
We sample noise sequences and enforce that all trajectories satisfy the constraints and minimize the average cost. 
Probabilistic constraint satisfaction holds with high confidence if a sufficiently large number of samples $N_{\mathrm{s}}$ is used, cf. scenario theory~\cite{garatti2024non}.}. 
\end{itemize}
All schemes are solved with IPOPT \cite{wachter2006implementation} formulated in CasADi \cite{andersson2019casadi} in Matlab on a standard laptop using a zero warm-start. 
We consider a prediction horizon of $N=15$ and the initial condition $x(0)=[0_2;10;0_3]$. 
The results can be seen in Table~\ref{tab:comp}. 
The computational demand of the sampling-based approaches increases with the number of samples, resulting in computation times that are orders of magnitude larger than the proposed approach. 
The proposed approach and the SMPC formulation leveraging bounded noise support both satisfy the constraints with the desired probability~\eqref{eq:prob_state}, while the nominal formulation fails to meet the probabilistic constraint. 
Sampling-based approximations empirically satisfy the probabilistic constraints~\eqref{eq:prob_state} if the number of samples is sufficiently large. 
The large expected cost of the SMPC schemes~\cite{schluter2020constraint,bonzanini2019tube,mcallister2022nonlinear} demonstrates the conservatism of their PRS construction, which leverages the worst-case noise bounds.}

\subsection{Closed-loop simulations}
Lastly, we conduct closed-loop simulations using the proposed SMPC formulation with Problem~\eqref{eq:SNMPC}. 
We consider the discrete distribution (iv) from Figure~\ref{fig:expected_PRS}, which yields the largest prediction error. 
The sampling-based approximations are not applicable since we consider operation over infinite time (cf. discussion in Sec.~\ref{sec:discussion}). 
The SMPC formulation leveraging bounded noise support~\cite{schluter2020constraint,bonzanini2019tube,mcallister2022nonlinear} are infeasible for this problem due to the large magnitude of the worst-case noise. 
Figure~\ref{fig:closedloop} shows the closed-loop results of the proposed method. 
The expected cost decreases during closed-loop operation and converges to a small constant, in accordance with Theorem~\ref{thm:SMPC}. 
The variations in the applied input highlights that feedback is generated through re-optimization. 
Lastly, although we operate close to the constraints, constraint violations are rare and the posed chance constraints~\eqref{eq:prob_state} are met with: 
$\max_{\in\mathbb{I}_{\geq 0}}k~ \prob{x(k)\notin\mathbb{X}}=0.17\%< 1-p=5\%.$

\begin{figure}[th]
\centering
 \includegraphics[width=0.4\textwidth]{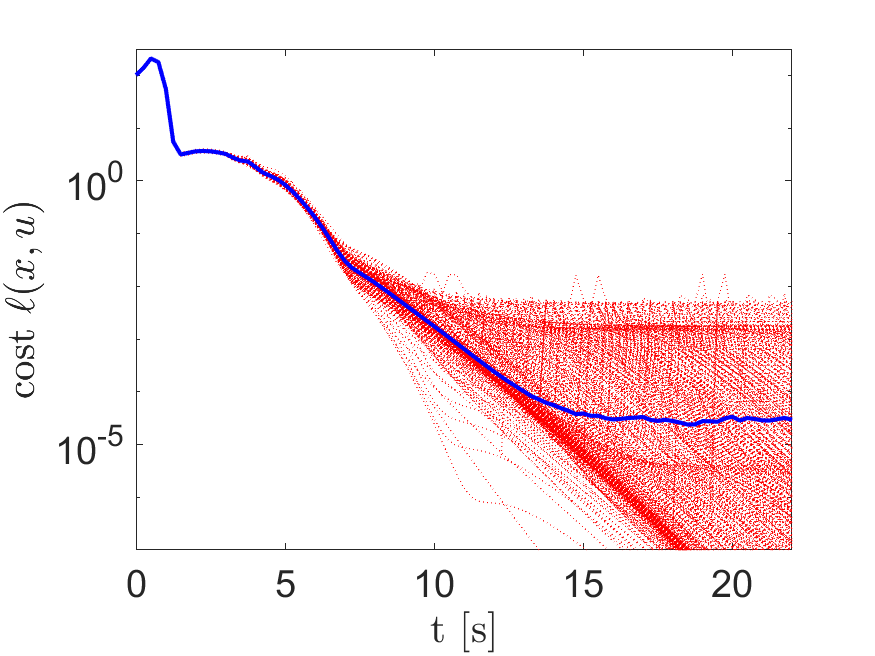}
 \includegraphics[width=0.4\textwidth]{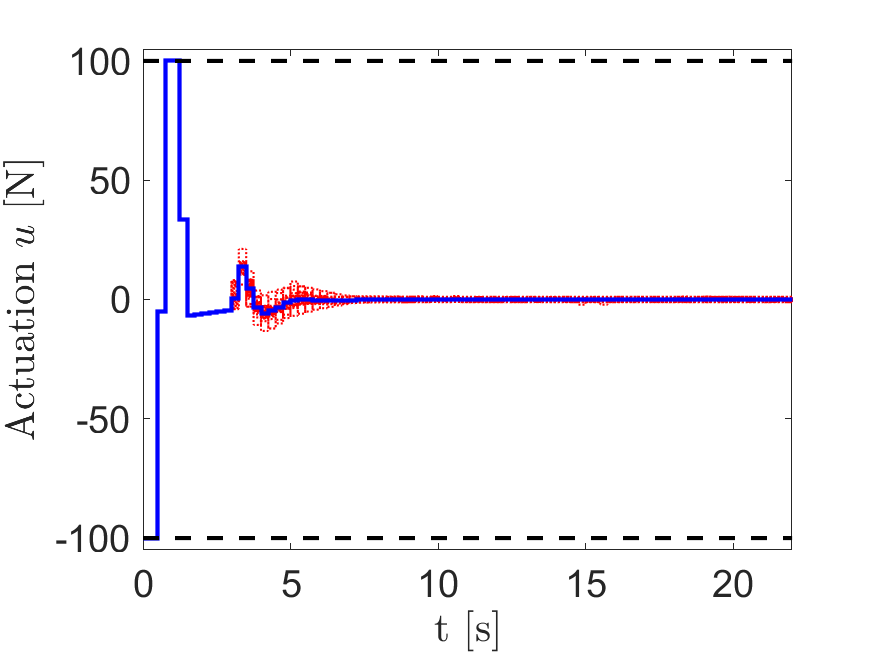}
 \includegraphics[width=0.4\textwidth]{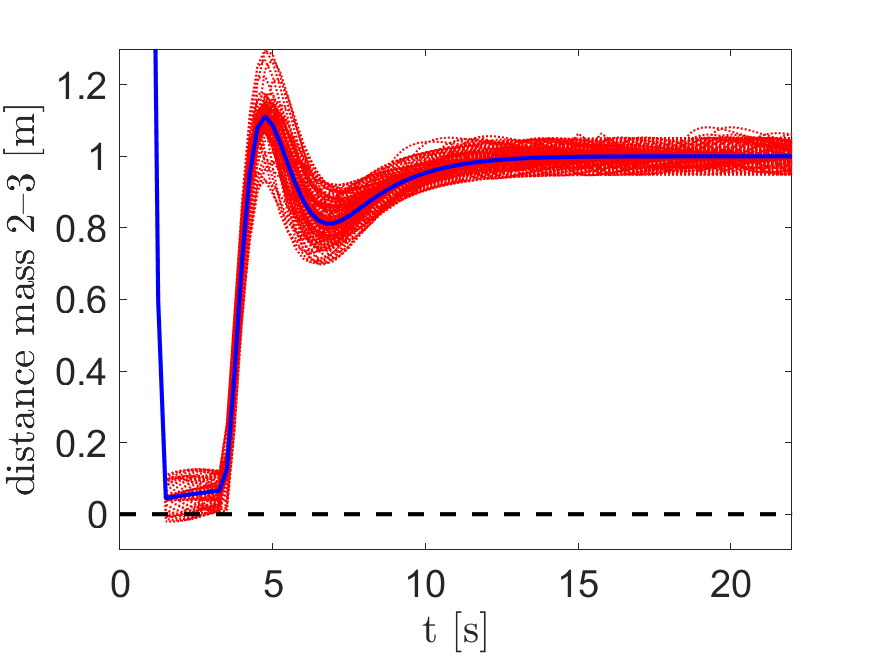}
\caption{Closed-loop system using proposed SMPC: random realizations (red, dotted) and mean (blue, solid).
Top: Closed-loop cost $\ell$. 
Middle: Applied actuation $u$ with input constraint (black, dashed). 
Bottom: Constraint $p_3-p_2+1\geq 0$ corresponding to a minimal distance between second and third mass (black, dashed). 
}
\label{fig:closedloop}
\end{figure}

\subsubsection*{Summary} 
Overall, we have applied the proposed stochastic predictive control framework to a nonlinear stochastic system. 
The statistics of the prediction error depend strongly on the applied inputs and exact distribution of the noise. 
The derived theory provides bounds on the expected error and probabilistic containment that hold uniformly for all possible input sequences and all distributions adhering to the assumed covariance bound (Asm.~\ref{asm:disturbance}). 
Thus, the closed-loop system satisfies the posed chance constraints~\eqref{eq:prob_state}. 
Furthermore, the proposed formulation is recursively feasible and the expected cost decreases to a small constant. 
Compared to common sampling-based approximations, the proposed SMPC scheme provides theoretical guarantees with an online computational complexity that is only marginally increased compared to a nominal MPC. 
This is crucial for implementation on embedded hardware for many applications. 
\section{Conclusion}
\label{sec:conclusion}
We presented a computationally tractable predictive control framework for nonlinear systems subject to unbounded stochastic noise. 
The resulting closed-loop system ensures recursive feasibility, chance constraint satisfaction, and an expected cost bound. 
The design leverages stochastic contraction metrics to design probabilistic reachable sets. 
We highlighted advantages of the proposed method in a numerical example involving nonlinear dynamics and unbounded stochastic noise.  
Future work aims at designing less conservative probabilistic reachable sets by leveraging more distributional information and accounting for parametric uncertainty. 
\appendix
\subsection{Design of stochastic contraction metric}
\label{app_CCM}
In the following, we discuss the computation of a contraction metric $M$ satisfying the conditions in Theorem~\ref{thm:CCM}. 
We first pose an optimization problem that minimize the resulting constraint tightening and then discuss how to formulate it as a finite-dimensional semi-definite program (SDP). 
For ease of exposition, we focus on the constant parametrization $M\in\mathbb{R}^{n\times n}$ from Corollary~\ref{cor:CCM_stoch} and comment on addressing state-dependent metrics $M(x)$ at the end. 

\subsubsection*{Minimize constraint tightening}
We formulate LMIs that minimize the constraint tightening in Proposition~\ref{prop:tightening} resulting from the PRS in Theorem~\ref{thm:PRS}. 
We denote the gradient related to the state constraints (cf. Asm.~\ref{asm:setup_regularity}\ref{asm:setup_regularity_constraint}) by $h_{\mathrm{x},j}(x)=\left.\dfrac{\partial h_j}{\partial x}\right|_x^\top\in\mathbb{R}^n$ and consider the following optimization problem:
\begin{subequations}
\label{eq:CCM_LMI}
\begin{align}
\label{eq:CCM_LMI_cost}
\inf_{W,X,\rho} &~\dfrac{1}{1-\rho}\tr{X} \\
\mathrm{s.t. }&
\label{eq:CCM_LMI_C}
\begin{pmatrix}
 W& Wh_{\mathrm{x},j}(x) \\
(W h_{\mathrm{x},j}(x))^\top& 1
\end{pmatrix}\succeq 0,~j\in\mathbb{I}_{[1,r]},\\ 
&\label{eq:CCM_LMI_rho}
\begin{pmatrix}
\rho W& (A(x,u)W)^\top\\
A(x,u)W& W
\end{pmatrix}\succeq 0,\\
&\label{eq:CCM_LMI_W_matrix}
\begin{pmatrix}
X& ( G\Sigma_{\mathrm{w}}^{1/2})^\top\\
 G\Sigma_{\mathrm{w}}^{1/2}& W
\end{pmatrix}\succeq 0,\\
&W\succ 0,~\rho\in[0,1),\\
& \forall (x,u)\in\mathbb{R}^n\times\mathbb{U}.\nonumber
\end{align}
\end{subequations}
\begin{proposition}
\label{prop:CCM_LMI}
Let Assumptions~\ref{asm:disturbance}, \ref{asm:setup_regularity}\ref{asm:setup_regularity_constraint}, and \ref{asm:setup_regularity}\ref{asm:setup_regularity_CCM} hold. 
Consider any feasible solution $(W,X,\rho)$ to Problem~\eqref{eq:CCM_LMI}. 
Then, the conditions in Corollary~\ref{cor:CCM_stoch} hold with $M=W^{-1}$, $\rho$, and $\bar{w}=\tr{X}$. 
Furthermore, for all $k\in\mathbb{I}_{\geq 0}$, the tightened constraints according to~\eqref{eq:tightened_constraints_prop} satisfy
\begin{align}
\label{eq:CCM_LMI_tightened_constraint}
\overline{\mathbb{X}}_k\supseteq \left\{x|~h_j(x)\leq -\sqrt{\frac{\tr{X}}{(1-\rho)(1-p)}},~j\in\mathbb{I}_{[1,r]}\right\}.
\end{align}
\end{proposition}
\begin{proof}
\textbf{Part I:} 
Applying a Schur complement to Condition~\eqref{eq:CCM_LMI_W_matrix} yields
\begin{align*}
X- \Sigma_{\mathrm{w}}^{1/2} G^\top M G\Sigma_{\mathrm{w}}^{1/2}\succeq 0,
\end{align*}
where we used $M=W^{-1}\succ 0$. 
Given that $\tr{C_1}\geq \tr{C_2}$ for any matrices $C_1\succeq C_2\succeq 0$, this implies
\begin{align*}
\bar{w}=&\tr{X}\geq \tr{\Sigma_{\mathrm{w}}^{1/2}G^\top M G\Sigma_{\mathrm{w}}^{1/2}}\\
=&\tr{\Sigma_{\mathrm{w}}G^\top M G},
\end{align*}
i.e., \eqref{eq:CCM_w_simplified} holds. 
The Schur complement of~\eqref{eq:CCM_LMI_rho} yields
\begin{align*}
\rho W - W A(x,u)^\top M A(x,u)W\succeq 0.
\end{align*}
Multiplying with $M=W^{-1}\succ 0$ from left and right yields~\eqref{eq:CCM_rho_simplified}. 
Thus, the conditions in Corollary~\ref{cor:CCM_stoch} hold. \\
\textbf{Part II:} 
Applying Schur complement to \eqref{eq:CCM_LMI_C} yields
\begin{align*}
0\leq 1-(W h_{\mathrm{x},j}(x))^\top W^{-1} W h_{\mathrm{x},j}(x)=1-\|W^{1/2}h_{\mathrm{x},j}(x)\|^2,
\end{align*}
which implies $c_j= \|M^{-1/2}h_{\mathrm{x},j}(x)\|\leq 1$ with $W^{1/2}=M^{-1/2}$. 
Hence, the constraint tightening formula~\eqref{eq:tightened_constraints_prop} from Proposition~\ref{prop:tightening} with $M=\underline{M}$ satisfies
\begin{align*}
c_j\sqrt{\sigma_{\mathrm{x},k}}\leq \sqrt{\dfrac{\bar{w}}{(1-\rho)(1-p)}}
= \sqrt{\dfrac{\tr{X}}{(1-\rho)(1-p)}},
\end{align*}
yielding the inner-approximation~\eqref{eq:CCM_LMI_tightened_constraint}.
\end{proof}
Inequality~\eqref{eq:CCM_LMI_tightened_constraint} shows that the objective in Problem~\eqref{eq:CCM_LMI} bounds the constraint tightening.
Thus, Problem~\eqref{eq:CCM_LMI} yields contraction metrics that minimize the constraint tightening in the SMPC formulation~\eqref{eq:SNMPC}, see also~\cite{zhao2022tube,tsukamoto2020robust} for similar procedures in robust MPC.
\subsubsection*{Finite-dimensional convex problem}
Next, we discuss how to solve Problem~\eqref{eq:CCM_LMI}. 
For a fixed constant $\rho$, Problem~\eqref{eq:CCM_LMI} is an SDP with a linear cost and LMI constraints.
Hence, $\rho$ is computed using a line-search or bi-section, as similarly suggested in~\cite{tsukamoto2020robust,zhao2022tube}. 

As is standard in contraction metrics~\cite{zhao2022tube,singh2023robust,sasfi2023robust}, Inequalities~\eqref{eq:CCM_LMI} need to be verified for all $(x,u)\in\mathbb{R}^n\times\mathbb{U}$, which is not directly computationally tractable.
Standard solutions to this problem include heuristic gridding or sums-of-squares programming~\cite{zhao2022tube,singh2023robust}. 
However, both approaches are difficult to apply for the considered global discrete-time conditions. 
Hence, we instead use a convex embedding~\cite{koehler2020nonlinear}. 
In particular, we write the Jacobian as a linear combination of basis-functions 
$A(x,u)=\sum_{i=1}^{n_\theta}A_{\theta,i} \theta_i(x,u),~A_{\theta,i}\in\mathbb{R}^{n\times n}$ 
with $\theta(x,u)=[\theta_1(x,u),\dots,\theta_{n_\theta}(x,u)]^\top\in\mathbb{R}^{n_\theta}$. Similar embedings are used for $h_{\mathrm{x},j}(x)$. 
Then, we determine a polytope $\Theta\subseteq\mathbb{R}^{n_\theta}$, such that $\theta(x,u)\in\Theta$ $\forall (x,u)\in\mathbb{R}^n\times\mathbb{U}$. 
Thus, we can replace $\forall (x,u)\in\mathbb{R}^n\times\mathbb{U}$ in Inequalities~\eqref{eq:CCM_LMI} with the sufficient condition $\forall \theta\in\Theta$. 
Since $\theta$ appears linearly in in Problem~\eqref{eq:CCM_LMI}, it suffices to verify the LMIs on the vertices of the set $\Theta$, thus reducing the problem to a finite-dimensional SDP.
For further details, see the theoretical derivations in~\cite[Prop.~1]{koehler2020nonlinear} and the online available code.
\subsubsection*{State-dependent contraction metric}
To address state-dependent contraction metrics $M(x)$, two modifications are required.
First, a finite parametrization is needed, typically of the form $W(x)=\sum_{i}W_{\theta_{{W}},i}\theta_{{W},i}(x)$ with basis functions $\theta_{{W},i}:\mathbb{R}^n\rightarrow\mathbb{R}$. 
Given that uniform bounds~\eqref{eq:CCM_M_bounds} need to hold globally for all $x\in\mathbb{R}^n$, the basis functions $\theta_{W,i}(x)$ should be globally bounded, such as sigmoid functions. 
Condition~\eqref{eq:CCM_anhiling} restricts the parametrization to only use component of the state $x$ that are not in the image of 
$G\in\mathbb{R}^{n\times q}$. 
This is achieved by computing a  matrix $\bar{G}$ that is orthogonal to $G$ and then considering a parametrization of the form $M(x)=\tilde{M}(\bar{G}x)$. 
To ensure satisfaction of~\eqref{eq:CCM_rho} for all $(x,u)\in\mathbb{R}^n\times\mathbb{U}$ extra care is required for the term $M(f(x,u,0))$. 
A typical approach is to derive a hyperbox $\Omega$, such that $\theta_{W}(f(x,u,0))\in\{\theta_{W}(x))\}\oplus\Omega$. 
Then, the convex embeddings can again be utilized to formulate the problem as a finite-dimensional SDP~\cite{koehler2020nonlinear}.

\subsection{PRS design for dynamics non-affine in $w$}
\label{app:affine_w}
In the following, we extend the PRS construction from  Section~\ref{sec:CCM} to address the case where $f$ is non-affine in $w$. 
This extension leverages an augmented state $\xi(k)=[x(k);w(k)]\in\mathbb{R}^{n_\xi}$, $n_\xi=n+q$, with noise $w_\xi(k)=w(k+1)\in\mathbb{R}^q$,  dynamics 
\begin{align*}
\xi(k+1)=&[f(x(k),u(k),w(k)); w(k+1)]\\
=:&f_{\xi}(\xi(k),w_\xi(k)),
\end{align*}
and Jacobian 
\begin{align*}
A_{\xi}(x,w,u):=&
\left.\begin{pmatrix}\dfrac{\partial f}{\partial x}&\dfrac{\partial f}{\partial w}\\ 0&0\end{pmatrix}\right|_{(x,w,u)}, ~
G_{\xi}:=\dfrac{\partial f_\xi}{\partial w_\xi}=\begin{pmatrix}0\\I_q\end{pmatrix}.
\end{align*}
\begin{theorem}
\label{thm:CCM_nonaffine}
Let Assumptions~\ref{asm:disturbance} hold and suppose $f$ is continuously differentiable.  
Suppose there exist a state-dependent matrix $M:\mathbb{R}^{n}\rightarrow\mathbb{R}^{n_\xi\times n_\xi}$, positive definite matrices $\underline{M},\overline{M}\in\mathbb{R}^{n_\xi\times n_\xi}$, a contraction rate $\rho\in[0,1)$, and a constant $\bar{w}\geq 0$, such that the following conditions hold for all $x\in\mathbb{R}^{n}$, $w\in\mathbb{R}^q$, $u\in\mathbb{U}$:
\begin{subequations}
\begin{align}
\label{eq:CCM_cond_aug_1}
 &\underline{M}\preceq M(x)\preceq \overline{M},\\
\label{eq:CCM_cond_aug_2}
& A_\xi(x,w,u)^\top M(f(x,u,w)) A_\xi(x,w,u)\preceq \rho M(x),\\
\label{eq:CCM_cond_aug_3}
&\tr{\Sigma_{\mathrm{w}}G_\xi^\top \bar{M} G_\xi}\leq \bar{w}.
\end{align}
\end{subequations}
Then, Assumptions~\ref{asm:PRS}--\ref{asm:PRS_nom} hold with
\begin{align}
\label{eq:PRS_design_augmented}
&\mathbb{D}_k=\left\{x|~\|E\cdot x\|_{\underline{M}}^2\leq\sigma_{\mathrm{x},k}\right\},\quad 
\sigma_{\mathrm{x},k}:=\dfrac{1-\rho^k}{1-\rho}\dfrac{\bar{w}\cdot \rho}{1-p}\nonumber\\
&\mathcal{R}_k(x_0,u(0:k-1))=z(k)\oplus\mathbb{D}_k,~k\in\mathbb{I}_{\geq 0}, 
\end{align}
$z(k)$ according to~\eqref{eq:nominal} and the lifting $E=[I_n;~0]\in\mathbb{R}^{n_\xi\times n}$.
\end{theorem}
\begin{proof}
\textbf{Part I: }
Analogous to Theorem~\ref{thm:CCM},  the incremental Lyapunov function $V_\delta:\mathbb{R}^{n_\xi}\times\mathbb{R}^{n_\xi}\rightarrow\mathbb{R}$ induced by the metric $M$ satisfies
\begin{subequations}
\begin{align}
\label{eq:CCM_augmented_1}
\|\xi-\tilde{\xi}\|_{\underline{M}}^2\leq V_\delta(\xi,\tilde{\xi})\leq &\|\xi-\tilde{\xi}\|_{\overline{M}}^2,\\
\label{eq:CCM_augmented_2}
\mathbb{E}_{w_\xi(k)}\left[V_\delta(f_\xi(\xi,w_\xi(k))),f_\xi(\tilde{\xi},0))\right]\leq &\rho V_\delta(\xi,\tilde{\xi})+\bar{w},
\end{align}
\end{subequations}
for any $\xi,\tilde{\xi}\in\mathbb{R}^{n_\xi}$, $u\in\mathbb{U}$, $k\in\mathbb{I}_{\geq 0}$. 
In particular, the augmented model $f_\xi$ is linear in $w_\xi$, Condition~\eqref{eq:CCM_anhiling} holds with $M(x)$ independent of $w$, and  $w_\xi(k)=w(k+1)$ also satisfies the conditions in Assumption~\ref{asm:disturbance}. \\
\textbf{Part II: } 
Let us denote $\xi(k)=[x(k);w(k)]$  and $\xi^z(k)=[z(k);0]$ with the energy $\delta(k)=V_\delta(\xi(k),\xi^{\mathrm{z}}(k))$. 
The initial condition satisfies
\begin{align}
\label{eq:CCM_augmented_3}
\mathbb{E}_{w(0)}[\delta(0)]\stackrel{\eqref{eq:CCM_augmented_1}}{\leq} \mathbb{E}_{w(0)}[\|G_\xi w(0)\|_{\overline{M}}^2]\stackrel{\eqref{eq:CCM_cond_aug_3}}{\leq} \bar{w}.
\end{align}
Using the lifting $E$, we get:
\begin{align}
\label{eq:CCM_augmented_4}
&\|E\cdot( x(k+1)-z(k+1))\|_{\underline{M}}^2\\
\stackrel{\eqref{eq:CCM_augmented_1}}{\leq}& V_\delta([x(k+1);0],[z(k+1),0])
\stackrel{\eqref{eq:CCM_cond_aug_3}}{\leq} \rho \delta(k),\nonumber
\end{align}
where the last inequality is analogous to~\eqref{eq:CCM_augmented_2}, but without $\bar{w}$ due to the absence of $w_\xi(k)$ on the left hand side.
Lastly, we follow the step from Theorem~\ref{thm:PRS} to get
\begin{align*}
&\mathbb{E}_{w(0:k)}[E\cdot (x(k+1)-z(k+1))\|_{\underline{M}}^2]
\stackrel{\eqref{eq:CCM_augmented_4}}{\leq} \rho\cdot\mathbb{E}_{w(0:k)}[\delta(k)]\\
\leq& \rho\cdot\mathbb{E}_{w(0:k-1)}[\mathbb{E}_{w(k)}[\delta(k)|w(0:k-1)]]\\
\stackrel{\eqref{eq:CCM_augmented_2}}{\leq} &\rho\cdot \mathbb{E}_{w(0:k-1)}[\rho\delta(k)+\bar{w}]\leq \dots\\
\leq& \rho\left(\rho^k \mathbb{E}_{w(0)}[\delta(0)]+\sum_{i=0}^k\rho^i \bar{w}\right)\\
\stackrel{\eqref{eq:CCM_augmented_3}}{\leq}& \rho\left(\rho^k\bar{w}+\sum_{i=0}^{k-1}\rho^i \bar{w}\right)=\rho\dfrac{1-\rho^{k+1}}{1-\rho}\bar{w}.
\end{align*}
Applying the Markov inequality finishes the proof. 
\end{proof}

\bibliographystyle{IEEEtran}  
\bibliography{Bibliography} 
%
 
\begin{IEEEbiography}[{\includegraphics[width=1in,height=1.25in,clip,keepaspectratio]{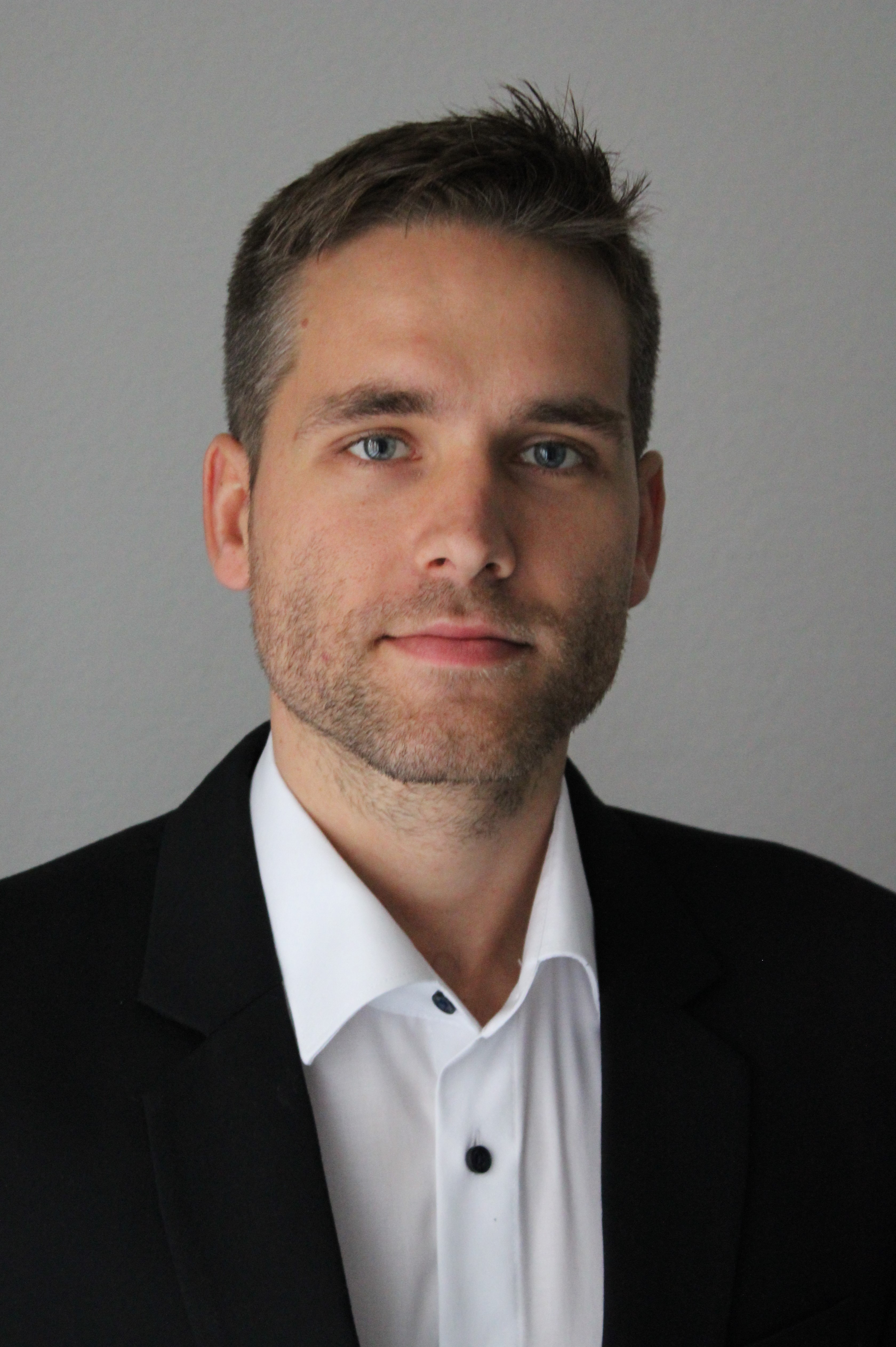}}]{Johannes K\"ohler}
received the Ph.D. degree from the University of Stuttgart, Germany, in 2021. He is currently a postdoctoral researcher at ETH Zürich, Switzerland. 
He is the recipient of the 2021 European Systems \& Control
PhD Thesis Award, the IEEE CSS George S. Axelby
Outstanding Paper Award 2022, and the Journal of
Process Control Paper Award 2023.
His research interests are in the area of model predictive control and control and estimation for nonlinear uncertain systems. 
\end{IEEEbiography}

\begin{IEEEbiography}[{\includegraphics[width=1in,height=1.25in,clip,keepaspectratio]{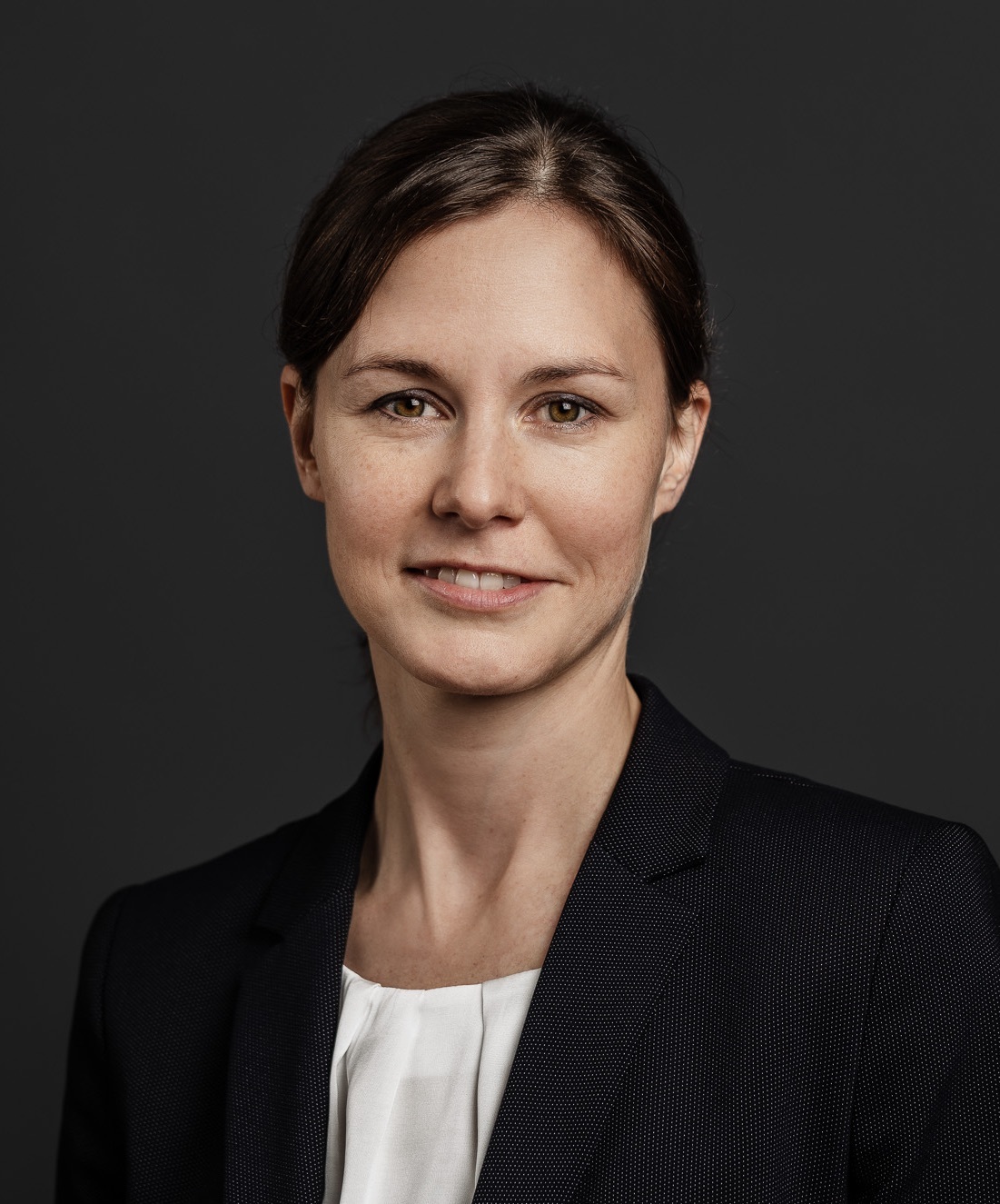}}]{Melanie N. Zeilinger}
 is an Associate Professor at ETH Zürich, Switzerland. 
She received the Diploma degree in engineering cybernetics from the University of Stuttgart, Germany, in 2006, and the Ph.D. degree with honors in electrical engineering from ETH Zürich, Switzerland, in 2011. 
From 2011 to 2012 she was a Postdoctoral Fellow with the Ecole Polytechnique Federale de Lausanne (EPFL), Switzerland.
She was a Marie Curie Fellow and Postdoctoral Researcher with the Max Planck Institute for Intelligent
Systems, Tübingen, Germany until 2015 and with the Department of Electrical Engineering and Computer Sciences at the University
of California at Berkeley, CA, USA, from 2012 to 2014. 
From 2018 to 2019 she was a professor at the University of Freiburg, Germany. 
Her current research interests include safe learning-based control, as well as distributed control and optimization, with applications to robotics and human-in-the loop control.
\end{IEEEbiography}
 
\end{document}